\documentclass[conference]{IEEEtran}
\IEEEoverridecommandlockouts
\usepackage{cite}
\usepackage{array}
\usepackage{graphicx}
\usepackage{amsmath,amssymb,amsfonts}
\usepackage[center]{caption}
\usepackage[font=footnotesize]{caption}
\usepackage{xcolor}
\usepackage{algorithm,algorithmicx,algpseudocode}
\usepackage{bm}
\usepackage{mathtools}
\usepackage{nccmath}
\definecolor{Green}{rgb}{0.13, 0.55, 0.13}
\usepackage{stfloats}
\usepackage{comment}
\excludecomment{draftnotes}
\usepackage{paralist}
\usepackage{subcaption}
\usepackage{enumitem}
\usepackage{mathrsfs}
\usepackage{balance}

\usepackage{amsthm}
\newtheorem{theorem}{Theorem}
\newtheorem{proposition}{Proposition}
\newtheorem{lemma}{Lemma}

\DeclareMathOperator*{\argmin}{arg\,min}

\title{From Timestamps to Versions: Version AoI in Single- and Multi-Hop Networks}

\author{
		\IEEEauthorblockN{Erfan Delfani and Nikolaos Pappas\\} 
		\IEEEauthorblockA{Department of Computer and Information Science, Link\"{o}ping University, Link\"{o}ping, Sweden}
        Email: \{erfan.delfani, nikolaos.pappas\}@liu.se
        \vspace{-12pt}
        \thanks{This work has been supported in part by the Swedish Research Council (VR), ELLIIT, and the European Union (ELIXIRION, 101120135, 6G-LEADER, 101192080, and SOVEREIGN, 101131481).}
	}

\setlength{\abovedisplayskip}{5pt plus 2pt minus 2pt}
\setlength{\belowdisplayskip}{5pt plus 3pt minus 2pt}
\setlength{\abovedisplayshortskip}{0pt plus 1pt}
\setlength{\belowdisplayshortskip}{0pt plus 1pt}

\setlength{\abovecaptionskip}{2pt}
\setlength{\belowcaptionskip}{0pt}
\setlength{\textfloatsep}{0pt plus 2pt minus 2pt}

\setlist[enumerate]{nosep}
\setlist[itemize]{nosep}

\begin{document}

    \vspace{-12pt}
	\maketitle
	
	\begin{abstract}
		Timely and informative data dissemination in communication networks is essential for enhancing system performance and energy efficiency, as it reduces the transmission of outdated or redundant data. Timeliness metrics, such as Age of Information (AoI), effectively quantify data freshness; however, these metrics fail to account for the intrinsic informativeness of the content itself. To address this limitation, content-based metrics have been proposed that combine both timeliness and informativeness. Nevertheless, existing studies have predominantly focused on evaluating average metric values, leaving the complete distribution—particularly in multi-hop network scenarios—largely unexplored. In this paper, we provide a comprehensive analysis of the stationary distribution of the Version Age of Information (VAoI), a content-based metric, under various scheduling policies, including randomized stationary, uniform, and threshold-based policies, with transmission constraints in single-hop and multi-hop networks. We derive closed-form expressions for the stationary distribution and average VAoI under these scheduling approaches. Furthermore, for threshold-based scheduling, we analytically determine the optimal threshold value that minimizes VAoI and derive the corresponding optimal VAoI in closed form. Numerical evaluations verify our analytical findings and provide valuable insights into leveraging VAoI in the design of efficient communication networks.
	\end{abstract}

	\section{Introduction}
	Efficient data management is a critical requirement for ensuring optimal performance in communication networks across a wide range of applications, from single-hop IoT monitoring systems to multi-hop satellite-based networks. As the volume of data generated by these networks increases, transmitting all data indiscriminately, without considering its semantic significance or task-specific utility, becomes increasingly unsustainable. Such an approach results in excessive consumption of critical resources, including energy and bandwidth, ultimately compromising system practicality and degrading overall performance. To address these challenges, there is an urgent need for network management approaches that optimize data transmission by leveraging goal-oriented semantic metrics; that is, by delivering the most timely and informative data within a constrained frequency of data transmissions~\cite{kountouris2021semantics, luo2025survey}.
	
	The AoI~\cite{kaul2012real} is a widely used semantic metric that quantifies the freshness of information in status update systems as the time elapsed since the generation of the most recently received data. AoI-aware scheduling effectively minimizes staleness by adapting transmissions to source data arrivals and network service times~\cite{yates2021age}. However, AoI captures freshness solely through data \emph{timestamps}, without accounting for actual changes in the source content. As a result, simply \emph{refreshing timestamps} may fail to deliver new information, and avoiding such redundant updates can reduce data transmission and energy consumption.
	
	To address this limitation, \emph{content-based} metrics such as Age of Incorrect Information (AoII) \cite{maatouk2020age} and VAoI \cite{yates2021vage} have been introduced. AoII adds a distortion-aware dimension by measuring the staleness of incorrect information—specifically when the receiver’s content deviates from the source—unlike AoI, which treats both correct and incorrect data uniformly. However, AoII requires precise knowledge of the content or \emph{state} of information at both source and destination for comparison, which is practical only when the state space is small and fully modeled, with all transitions known. In many real-world applications, such complete knowledge of source content and transitions may not be available.
	In such cases, VAoI offers a more practical, content-based metric by focusing solely on \emph{content changes} at the source, where data evolve through successive, non-reverting versions. This requires minimal knowledge: at any time, either a new or the previous version exists, and the receiver must track these versions as timely as possible. Defined as the number of versions by which the receiver lags behind the source, VAoI further improves upon AoI by replacing timestamps with version numbers, thereby eliminating the challenging requirement of clock synchronization between the transmitter and receiver \cite{Mehrdad2025CL}. It is computed simply by comparing the receiver’s stored version with the source’s current version.
	
    While these metrics have attracted attention, the majority of existing research has focused on first-moment analyses, i.e., average values. However, a deeper understanding of their full distributions is critical for analyzing and optimizing system behavior, particularly under resource constraints.
    This gap becomes even more significant when moving from single-hop to multi-hop communication scenarios.
    In this work, we address this gap by providing a comprehensive analysis of the distribution of the content-based metric VAoI under various rate-constrained transmission policies in both single-hop and multi-hop settings. The main contributions of this study are as follows:
	\begin{inparaenum}[1)]
		\item We derive closed-form expressions for the stationary distribution and average VAoI in a single-hop setup under three source transmission policies, subject to an average update rate constraint.
		\item We investigate an optimal on-off transmission policy under the rate constraint, proving its threshold-based structure and deriving the closed-form optimal threshold and the resulting average VAoI.
		\item We extend the analysis to a multi-hop setup with $N$ intermediate nodes over unreliable links, demonstrating that the destination VAoI equals a time-shifted copy of the first node’s VAoI plus additional random variables, and we derive the corresponding closed-form average VAoI.
		\item We validate the analytical findings through simulations and investigate the behavior of the VAoI in both single-hop and multi-hop scenarios.
	\end{inparaenum}
    
    The remainder of the paper is organized as follows. Section \ref{sec_RelatedWorks} reviews related work, while Section \ref{sec_SystemModel} presents the system model. Section \ref{Sec_OneHop} analyzes VAoI in a single-hop setup under various update policies, followed by a multi-hop VAoI analysis in Section \ref{Sec_MultiHop}. Numerical results are discussed in Section \ref{Sec_NumericResults}, and Section \ref{sec_Conclusion} concludes the paper.

    \begin{figure*}[!t]
		\centering \
		\begin{minipage}[c]{0.355\linewidth}
			\centering
		\includegraphics[scale=0.057]{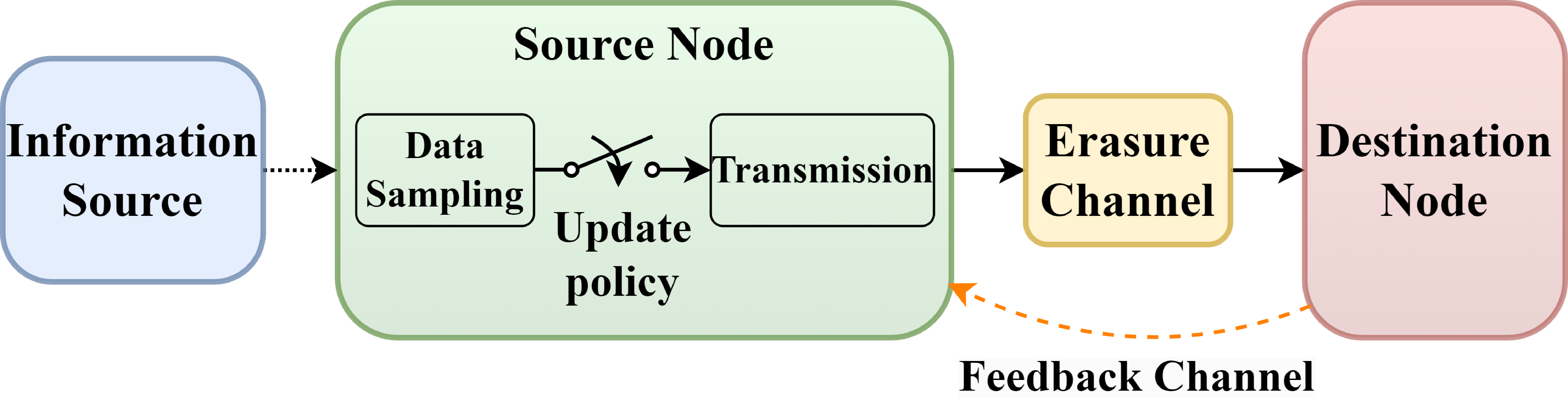} 
		\caption{Single-hop status update system.}
		\label{fig_SingleHopSetup}
		\end{minipage}
		\ \ \hfill
		\begin{minipage}[c]{0.31\linewidth}
			\centering
		\includegraphics[scale=0.088, trim=0 0 140 0, clip]{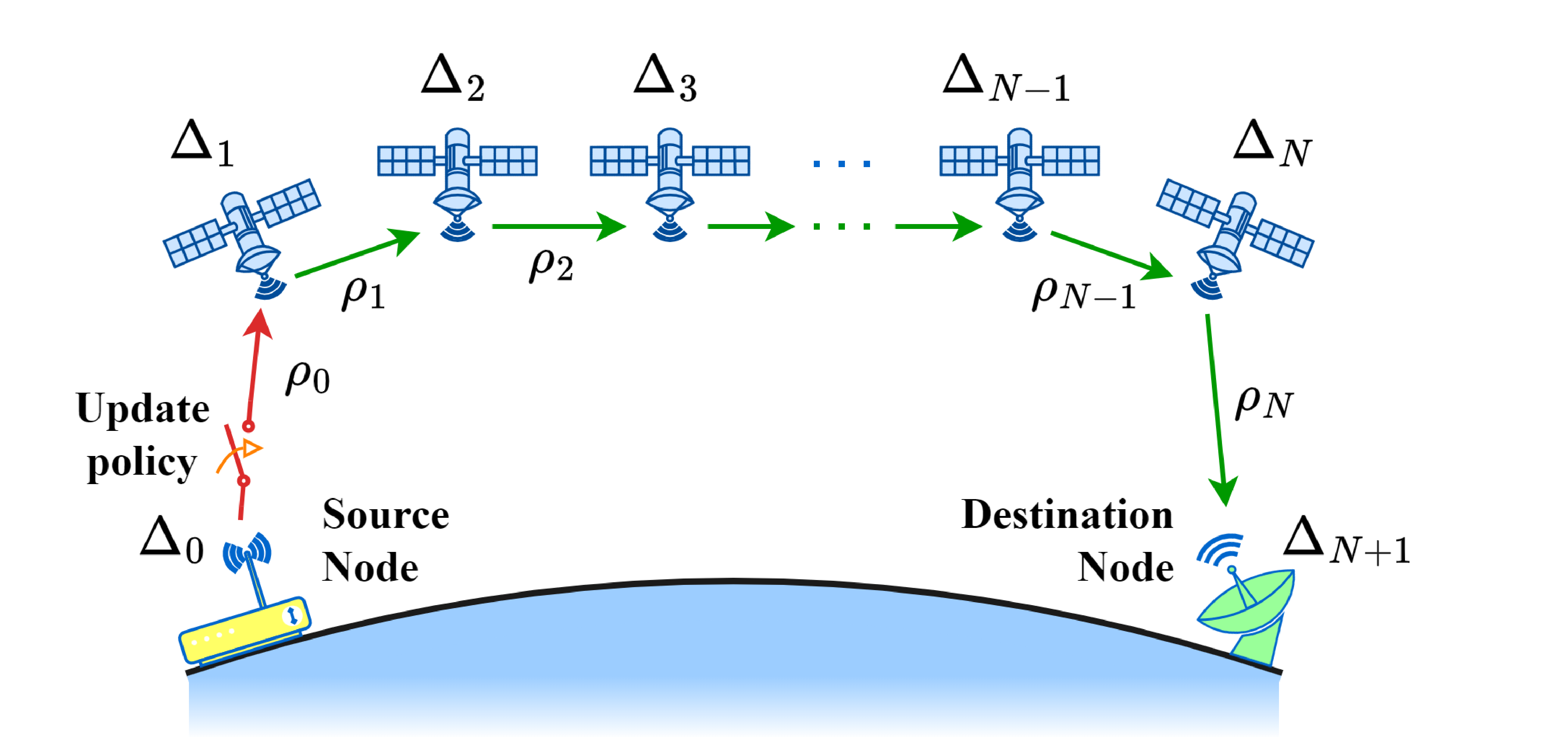} 
		\caption{Multi-hop communication via LEO satellites.}
		\label{fig_SatelliteSetup}
		\end{minipage}
        \hfill
            \begin{minipage}[c]{0.30\linewidth}
			\centering
		\includegraphics[scale=0.084, trim=150 0 125 0, clip]{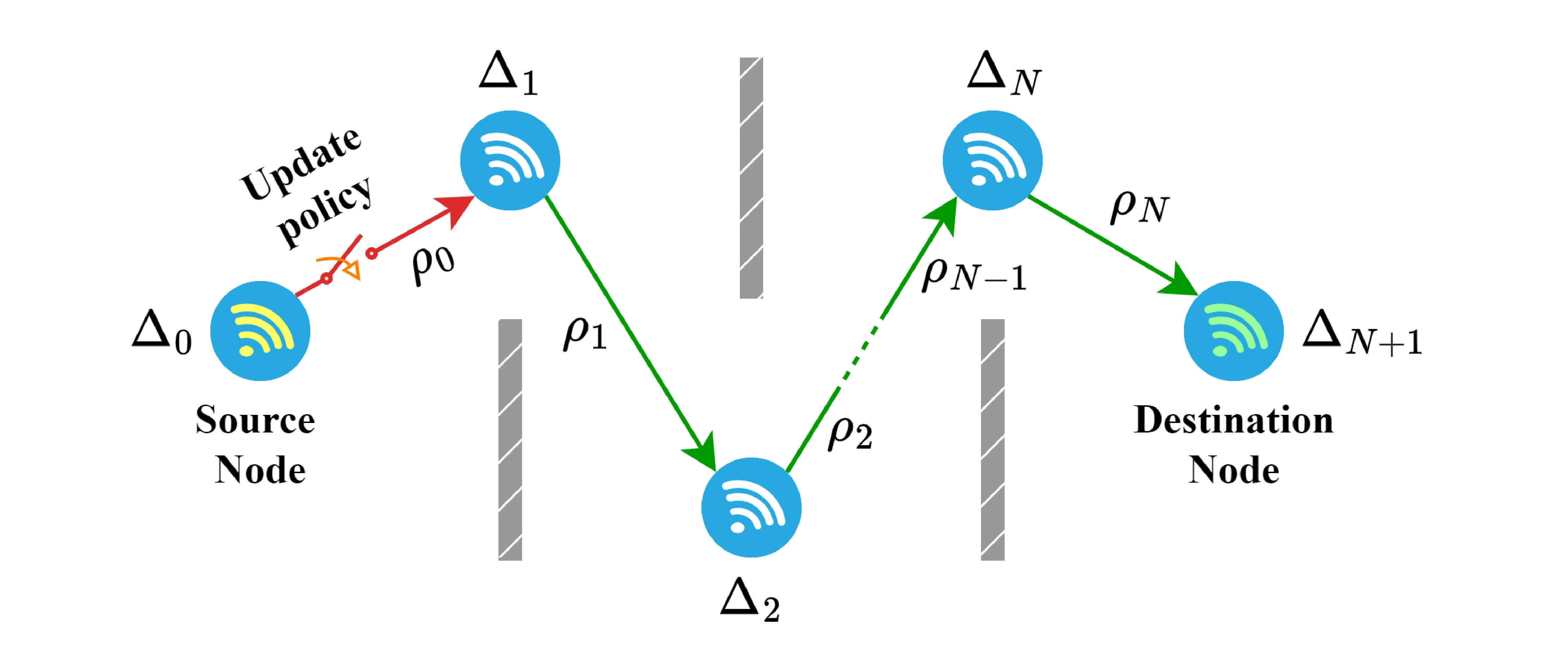} 
		\caption{Multi-hop communication in the presence of physical obstructions.}
		\label{fig_MeshSetup}
		\end{minipage}
		\hfill
        \vspace{-2em}
	\end{figure*}
	
    \section{Related Works}
    \label{sec_RelatedWorks}
	Several studies have investigated the distributions of AoI and Peak AoI (PAoI) in continuous-time systems using queueing theory \cite{costa2016age,champati2019distribution,inoue2019general,yates2020age,Chiariotti2021PAoI,jiang2021joint,abd2022closed,moltafet2022moment,fiems2023age,akar2025age,inoue2025characterizing}, while others have examined discrete-time settings \cite{kosta2021age,zhang2021age,akar2021discrete,ji2024age,Zhang2025AoIVehicles}. Notably, \cite{kosta2021age} derives general expressions for the stationary distributions and generating functions of AoI and PAoI in discrete-time single-server queues under various disciplines, together with methods for evaluating nonlinear age functions. Extending stochastic hybrid system techniques to discrete time, \cite{zhang2021age} models AoI and packet age as a two-dimensional Markov process in bufferless queues with Bernoulli arrivals. A matrix-analytic framework based on quasi-birth–death processes is proposed in \cite{akar2021discrete} to obtain exact per-source AoI and PAoI distributions in multi-source IoT systems with discrete phase-type service times. The study in \cite{ji2024age} investigates age-optimal packet scheduling with delayed feedback and long-term resource constraints, providing closed-form benchmarks for random and deterministic policies; and \cite{Zhang2025AoIVehicles} analyzes AoI and PAoI in multi-source Ber/Geo/1/1 systems under preemptive and non-preemptive policies, deriving closed-form expressions for distributions and averages.
	
	For content-based metrics, several works have examined the distribution of AoII \cite{maatouk2020age,chen2024minimizing,salimnejad2024age}. Specifically, \cite{maatouk2020age} derives stationary AoII distributions for symmetric multi-state Markov sources under always-update and threshold policies. The work in \cite{chen2024minimizing} investigates AoII in slotted systems with random transmission delays for two-state Markov sources under threshold-based updates. Using discrete-time Markov chain (\textsc{dtmc}) analysis, \cite{salimnejad2024age} provides stationary AoII and Age of Incorrect Version (AoIV) distributions for two-state Markovian sources under specified transmission policies. A content-based metric, the Age of Changed Information (AoCI), was introduced in \cite{wang2021age}, where the optimal threshold minimizing a weighted sum of AoCI and update costs under threshold-based policies was derived. Stationary distributions of VAoI have also been modeled for energy-harvesting systems \cite{delfani2024semantics} using \textsc{dtmc}s with stochastic energy arrivals and threshold-based transmissions. Furthermore, \cite{karevvanavar2024version} analyzes VAoI distributions in non-orthogonal multiple access fading broadcast channels with randomly arriving versioned packets and power constraints under a channel-only randomized stationary policy.
	
	Regarding multi-hop networks, \cite{ayan2020probability} derives the distribution of discrete-time AoI in $N$-hop systems with time-invariant packet loss through recursive formulations, while \cite{Vikhrova2020Mhop} studies continuous-time AoI and PAoI distributions in two-hop scenarios. However, most existing research on multi-hop networks, including \cite{Talak2018Mhop, Bedewy2019Mhop, buyukates2019age, Chiariotti2022Mhop, Tripathi2023Mhop, Kaswan2023Mhop, Sinha2024Tandem,asvadi2024age, Delfani2025LEO}, primarily focuses on average metrics.

	\section{System Model}
	\label{sec_SystemModel}
	We consider a communication network for transmitting data from a source to a destination node, which may be separated by either a single hop or multiple hops. The data at the source are constantly sampled from an information source and then transmitted according to an \emph{update policy}. The update policy schedules each data sample or \emph{update} to be either transmitted or skipped, while satisfying a constraint on the long-term average transmission rate. Specifically, the average update rate must not exceed a predefined limit. We assume a slotted time axis $t \in \{0,1,2,\cdots\}$, and our objective is to investigate the distribution of the VAoI at the destination node in both single-hop and multi-hop scenarios under various rate-constrained policies. The details of the system model are explained below. 
	
	\textit{Single- and Multi-hop Setups:} 
	We first consider a single-hop end-to-end status update system, as shown in Fig.~\ref{fig_SingleHopSetup}, where the destination node is one hop away and connected to the source through a direct but unreliable channel. The channel is modeled as an erasure channel, delivering each update with a success probability of $p_s$ per time slot. A reliable feedback channel from the destination to the source provides acknowledgments upon successful reception.
    We then extend this model to a multi-hop network comprising $N$ intermediate nodes that relay updates from the source to the final destination. This setup applies to communication networks such as Low Earth Orbit (LEO) satellite-based links between remote ground stations (Fig.~\ref{fig_SatelliteSetup}) and mesh networks where direct or line-of-sight connections are infeasible due to physical obstructions or link budget limitations (Fig.~\ref{fig_MeshSetup}). The relaying route is known a priori, and each node constantly forwards the most recent version until it is successfully received. Each link between nodes $i$ and $i+1$ is characterized by its own success probability $\rho_i$ for $i \in \{0,1,2,\dots,N\}$, where $\rho_0 = p_s$.
    In both setups, updates are transmitted at the beginning of each time slot and are received at its end; this sequence is preserved across all links. Each node stores only the latest version, discarding the previous ones. 
	
	\textit{Version Age of Information:} 
    We adopt the VAoI as the performance metric that captures both the timeliness and relevance of information. Unlike the AoI, which measures the time elapsed since the generation timestamp $u(t)$ of the freshest received update, $\Delta^{AoI}(t) \!=\! t \!-\! u(t),$ the VAoI quantifies how many versions the receiver lags behind the information source. At time slot $t$, the VAoI, denoted $\Delta(t)$, is defined as $\Delta(t) \!=\! V_S(t) \!-\! V_R(t)$, where $V_S(t)$ is the version index at the information source and $V_R(t)$ the version stored at the receiver. Fig. \ref{fig_VAoIAoI} shows the evolution of AoI and VAoI over discrete slots. A successful update occurs at $t \!=\! 3$, causing AoI and VAoI to drop to $1$ and $0$, respectively, since the update is one slot old and no new versions have been generated. Between $t \!=\! 3$ and the next update at $t \!=\! 12$, the AoI grows linearly to $9$, while the VAoI reaches $4$, reflecting the generation of $4$ new versions at the information source. 
    
	Fig. \ref{fig_VAoI_Evolution} illustrates VAoI dynamics in a two-hop network. The information source generates new versions at time slots $0$, $2$, $3$, $6$, and $8$, with its version index $V_S(t)$ increasing by one at the start of each subsequent slot. Node $0$ holds the most recent version available in the network; hence, its VAoI is always zero, i.e., $V_0(t) \!=\! V_S(t)$. The versions at node $1$, denoted by $V_1(t)$, are updated according to the update policy and the transmissions from node $0$. If a transmission at time $t$ succeeds, then $V_1(t\!+\!1) \!=\! V_0(t)$; otherwise, $V_1(t\!+\!1) \!=\! V_1(t)$. The corresponding VAoI, $\Delta_1(t) \!=\! V_S(t) \!-\! V_1(t)$, is listed in parentheses in the second row of the table and plotted in blue. Similarly, node $2$ stores versions $V_2(t)$ received from node $1$, which transmits in every slot.\footnote{In practice, if feedback channels are available between relay nodes, retransmissions of already delivered versions can be avoided; otherwise, relay nodes transmit continuously each time slot. In both cases, however, the presented VAoI analysis remains valid.} Here, $V_2(t\!+\!1) \!=\! V_1(t)$ if the transmission at $t$ succeeds; otherwise, $V_2(t\!+\!1) \!=\! V_2(t)$. The VAoI at node $2$, $\Delta_2(t) \!=\! V_S(t) \!-\! V_2(t)$, is shown in parentheses in the third row of the table and plotted in green in Fig. \ref{fig_VAoI_Evolution}.

     \begin{figure}[!t]
		\centering
		\includegraphics[scale=0.045]{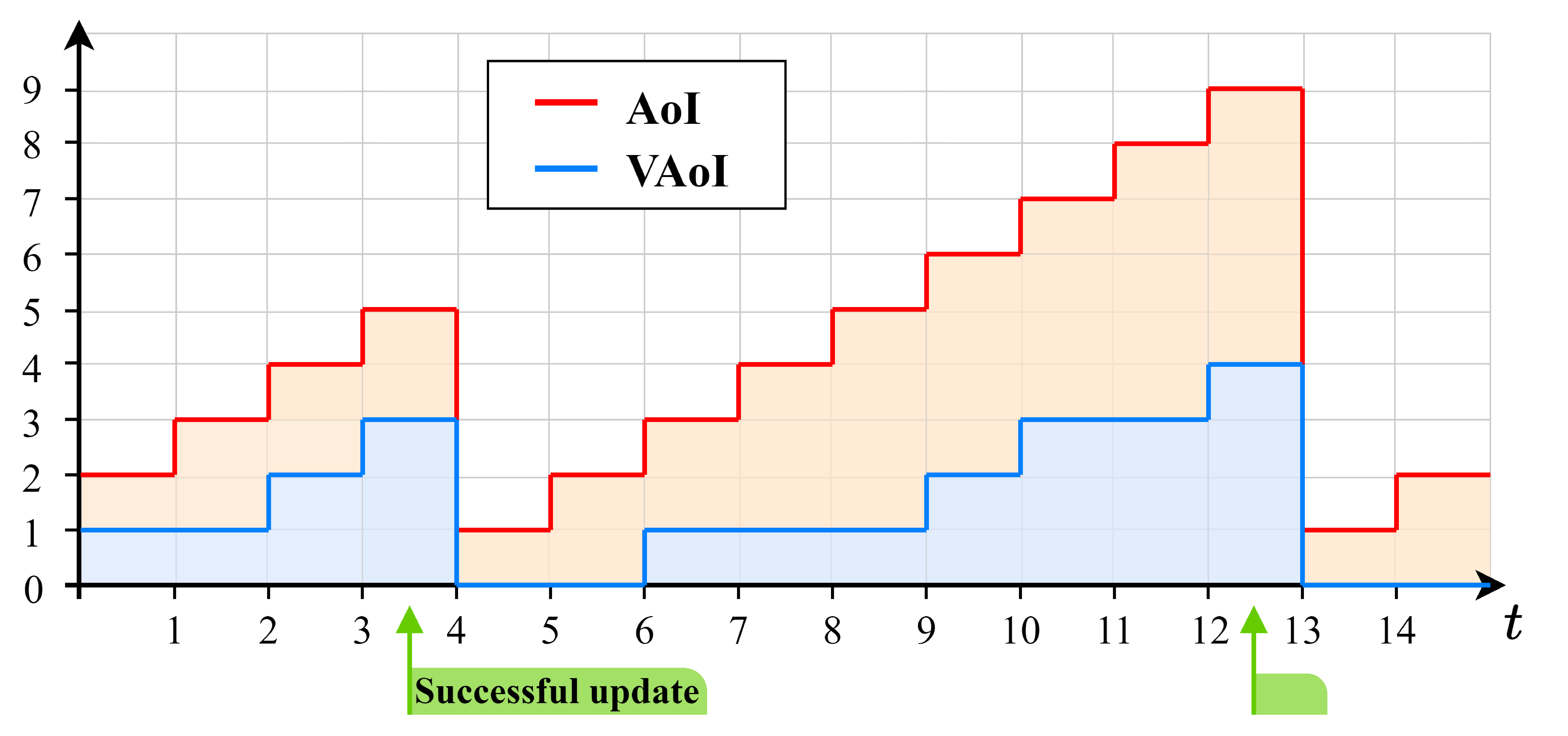} 
		\caption{Evolution of VAoI compared to AoI over time.}
		\label{fig_VAoIAoI}
	\end{figure}
    
	\begin{figure}[!t]
		\centering
		\includegraphics[scale=0.082]{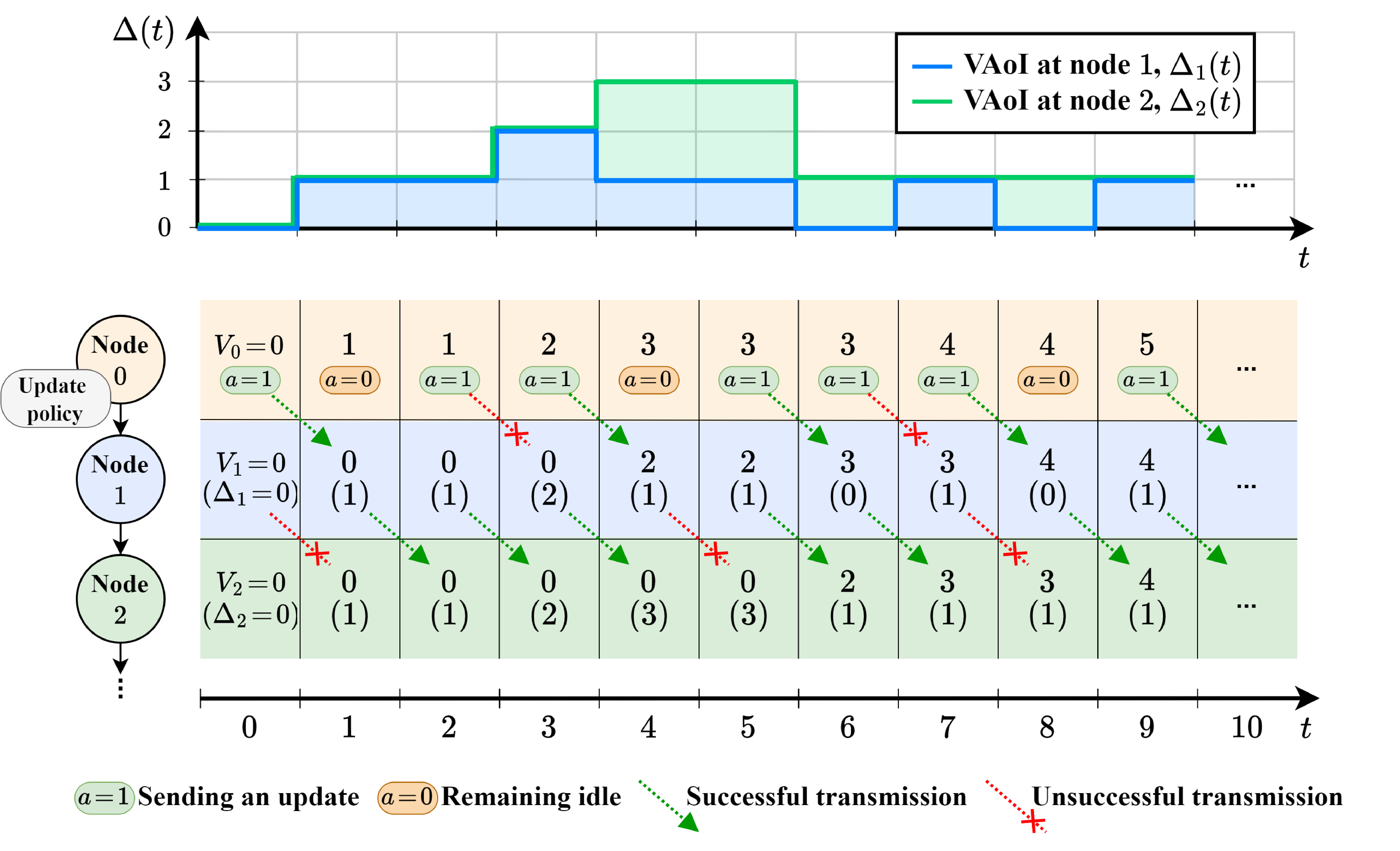} 
		\caption{Evolution of VAoI within the network over time.}
		\label{fig_VAoI_Evolution}
	\end{figure}
	
	In our system model, source versions are generated independently in each time slot with probability (w.p.) $p_g$, following a Bernoulli process.\footnote{This implies that the intervals between version generations follow a geometric distribution, the discrete-time analogue of the exponential distribution commonly used in continuous-time systems.} The evolution of versions at the destination node depends on the version generation process, the update policy, and the communication channel performance. 
	
	\textit{Remark:} All subsequent analysis for VAoI can be reduced to the discrete-time AoI by setting $p_g = 1$, i.e., when content changes are disregarded, and only data staleness due to elapsed time slots is considered. This demonstrates that \emph{VAoI is a more general metric, with AoI representing a special case.}
	
	\textit{Update Policies:}
	Various update policies can be considered to satisfy the average update rate constraint, ranging from myopic approaches—such as uniform rate transmission—to randomized stationary policies, as well as the optimal threshold policy derived from a Constrained Markov Decision Process (\textsc{cmdp}) problem (see Appendix \ref{Sec_CMDP}). Assuming a discrete-time, we represent the transmission \emph{action} at time $t$ under policy $\pi$ by the binary variable $a^\pi(t)$; $a^\pi(t)=1$ if a transmission is attempted at time $t$ and $0$ otherwise. The average update rate constraint can be expressed as:
	\begin{align}
		\lim_{T\rightarrow \infty} \frac{1}{T} \mathbb{E} \left [\sum_{t=0}^{T-1} a^\pi \!\left(t\right)\right ]  \leq \alpha, 
		\label{eq_constraint}
	\end{align}
	where $0 < \alpha \leq 1$ denotes the maximum average update rate. 
	We consider the following update policies in our analysis:
	\begin{itemize}[leftmargin=1em]
		\item \textbf{Randomized stationary policy}: In each time slot, a transmission occurs with probability $\alpha$; that is, $a^\pi(t) = 1$ with probability $\alpha$, and $a^\pi(t) = 0$ otherwise.
		\item \textbf{Uniform policy}: Transmissions occur periodically every $D$ samples, such that $a^\pi(t) = 1$ for $t \in \{0, D, 2D, \dots\}$, and $a^\pi(t) = 0$ otherwise.
		The maximum feasible value of $D$ that satisfies the constraint \eqref{eq_constraint} is given by $D = \lceil \frac{1}{\alpha} \rceil$, where $\lceil x \rceil$ denotes the smallest integer greater than or equal to $x$.
		
		\item \textbf{Threshold policy}: Transmission occurs only when the VAoI at the receiving node exceeds the threshold $\Delta_T$, that is, $a^\pi(t) = 1$ when $\Delta(t) \geq \Delta_T$ and $a^\pi(t) = 0$ otherwise. The smallest threshold $\Delta_T$ satisfying constraint \eqref{eq_constraint} is adopted. This policy is optimal for on-off scheduling, as demonstrated by the \textsc{cmdp} formulation and the proof provided in Appendix \ref{Sec_CMDP}.
	\end{itemize}

	\section{Analysis of VAoI in a Single-hop Setup}
	\label{Sec_OneHop}
	We analyze the VAoI in a single-hop system under the three aforementioned update policies using a \textsc{dtmc} model. The VAoI in the next time slot is stochastically determined by the current VAoI, the update policy, and system dynamics, including the version generation and channel success processes. 
	
	\textit{Balance equations in \textsc{dtmc}:} the steady-state probability of state $n$, i.e., the long-run probability of VAoI being equal to $n$ is given by:
    
	\begin{align}
		\label{eq_Balance}
		\mu_n = \sum_{j=0}^{\infty} P_{jn} \mu_j, \qquad n \in \{0,1,2,\dots\},
	\end{align}
    
	\noindent where $\sum_{n=0}^{\infty} \mu_n = 1$. Here, $P_{jn}$ denotes the transition probability from state $j$ to state $n$, i.e., $P_{jn}=\mathbb{P}\left( \Delta(t+1)=n \mid  \Delta(t)=j \right)$.
	The steady-state distribution of a \textsc{dtmc} exists if it is \emph{irreducible}, i.e., every state can be reached from every other state, and \emph{positive recurrent}, i.e., the expected return time to each state is finite~\cite[Sec. 1.8]{norris1998markov}. A \textsc{dtmc} is said to be \emph{ergodic} if it is irreducible, positive recurrent, and also \emph{aperiodic}, i.e., the chain does not become \emph{stuck} in a cycle of fixed length. For an ergodic \textsc{dtmc}, the long-term expected value and the expected value of the time average converge to the stationary mean~\cite[Sec. 1.10]{norris1998markov}:
	\begin{align}
		\label{eqn_MeanVAoI}
		\bar{\Delta} \!=\! \lim_{t \to \infty} \mathbb{E}[\Delta(t)] \!=\! \lim_{T \to \infty} \mathbb{E}\left[ \frac{1}{T} \sum_{t=0}^{T-1} \Delta(t) \right] \!=\! \sum_{n = 0}^{\infty} n \mu_n.
	\end{align}
	
	Unless stated otherwise, the Markov chains induced by the update policies in this section are ergodic. Since every VAoI state can reach state $0$ (or $1$) and vice versa with positive probability, and the transitions allow for self-resets and exits from any loop, the chain is irreducible, positive recurrent, and aperiodic. We proceed with analyzing their steady-state distributions.

	\subsection{VAoI of randomized stationary policy}
	\begin{proposition}
		\label{Prop_StateProbRandom}
		The steady-state probability of VAoI being in state $n$ under a randomized stationary policy with transmission probability $\alpha$ is given by:
		\begin{align}
			\mu_n = 
			\begin{cases}
				\frac{\alpha p_s (1-p_g)}{\beta}, & n=0,\\
				\frac{\alpha p_s p_g}{\beta^2}, & n=1,\\
				\left [ \frac{(1-\alpha p_s)p_g}{\beta} \right ]^{n-1} \! \mu_{1},& n \geq 2,
			\end{cases}
		\end{align}
		where $\beta=1-(1-\alpha p_s)(1-p_g)$.
	\end{proposition}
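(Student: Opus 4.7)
The plan is to solve the balance equations \eqref{eq_Balance} by first tabulating the one-step transitions of the VAoI chain and then exploiting its nearly-tridiagonal structure. In each slot two independent Bernoulli events occur: a successful delivery with effective probability $\alpha p_s$ (attempt probability $\alpha$ combined with channel success probability $p_s$), and a new source version with probability $p_g$. Conditioning on the four joint outcomes and applying $\Delta(t)=V_S(t)-V_R(t)$ shows that from any state $j\geq 1$ the chain moves to $0$, to $1$, stays at $j$, or steps up to $j+1$ with probabilities $\alpha p_s(1-p_g)$, $\alpha p_s p_g$, $(1-\alpha p_s)(1-p_g)$, and $(1-\alpha p_s)p_g$, respectively; from state $0$, the chain stays with probability $1-p_g$ and jumps to $1$ with probability $p_g$, since any delivery there is redundant.

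Given these transitions, the balance equation at $n\geq 2$ involves only $\mu_{n-1}$ and $\mu_n$, and rearranges to the one-step recursion $\mu_n = \frac{(1-\alpha p_s)p_g}{\beta}\mu_{n-1}$ with $\beta = 1-(1-\alpha p_s)(1-p_g)$, which yields the geometric-tail form stated in the proposition once $\mu_1$ is known. I would then obtain $\mu_0$ from the state-$0$ balance equation: because every state $j\geq 1$ feeds back into $0$ with the same probability $\alpha p_s(1-p_g)$, the infinite sum collapses via $\sum_{j\geq 1}\mu_j = 1-\mu_0$, leaving a single linear equation with solution $\mu_0 = \alpha p_s(1-p_g)/\beta$. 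Similarly, the state-$1$ balance equation becomes linear in $\mu_1$ after splitting off the distinct contributions from $\mu_0$ and $\mu_1$ and collapsing $\sum_{j\geq 2}\mu_j = 1-\mu_0-\mu_1$ with common factor $\alpha p_s p_g$; substituting the value of $\mu_0$ then yields $\mu_1 = \alpha p_s p_g/\beta^2$.

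The main obstacle is the careful bookkeeping of the boundary transitions at $n=0$ and $n=1$, where two of the generic transitions share a destination (for example, at $j=1$ both the \emph{drop-to-$1$} and the \emph{self-loop} events land in state $1$, so their probabilities must be summed). Once the transitions are tabulated correctly, the collapsing identity $\sum_{j\geq k}\mu_j = 1-\sum_{j<k}\mu_j$ reduces each balance equation to a short linear relation. Existence and uniqueness of the stationary distribution follow from the ergodicity remark preceding the proposition, while normalizability of the geometric tail is guaranteed since $(1-\alpha p_s)p_g/\beta<1$ whenever $\alpha p_s>0$.
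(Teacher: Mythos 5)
Your proposal is correct and follows essentially the same route as the paper: it derives the same one-step transition probabilities by conditioning on the joint delivery/version-generation events, writes the same three balance equations (at $0$, $1$, and $n\geq 2$), and collapses the infinite sums via normalization to obtain $\mu_0$, $\mu_1$, and the geometric tail. The only additions beyond the paper's argument are the explicit remarks on the boundary bookkeeping at $n=1$ and on normalizability of the tail, both of which are accurate.
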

	\begin{proof}
		The proof is provided in Appendix \ref{Appen_Proof_RSpolicy_SSProbs}.
	\end{proof}
	
	\begin{lemma}
		\label{Lemma_AvgVAoIRandomized}
		The average VAoI under a randomized stationary policy with transmission probability $\alpha$ is given by: $\bar{\Delta} = \frac{p_g}{\alpha p_s}$.
	\end{lemma}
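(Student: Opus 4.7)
The plan is to compute $\bar{\Delta}=\sum_{n=0}^{\infty} n\mu_n$ directly from the closed-form steady-state distribution in Proposition~\ref{Prop_StateProbRandom}, using the geometric structure for $n\geq 2$. The key observation is that the expression $\mu_n = r^{n-1}\mu_1$ with $r \coloneqq \frac{(1-\alpha p_s)p_g}{\beta}$ actually matches $\mu_1$ itself at $n=1$ (since $r^0=1$), so the $n\geq 1$ tail is a single clean geometric family and the $n=0$ term vanishes from the sum.

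First I would write $\bar{\Delta} = \mu_1 \sum_{n=1}^{\infty} n\, r^{n-1}$ and invoke the standard identity $\sum_{n=1}^{\infty} n\, r^{n-1} = (1-r)^{-2}$, whose validity requires $0 \le r < 1$; this is immediate from $0<p_g\le 1$ and $0<\alpha p_s\le 1$, which force $(1-\alpha p_s)p_g < \beta$. Then I would substitute $\mu_1 = \alpha p_s p_g / \beta^2$ to reduce the average VAoI to $\bar{\Delta} = \frac{\alpha p_s p_g}{\beta^2 (1-r)^2}$.

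The main (and essentially only) obstacle is the algebraic simplification of $1-r$. Expanding $\beta = 1-(1-\alpha p_s)(1-p_g) = p_g + \alpha p_s(1-p_g)$, I would compute
\begin{align*}
1 - r \;=\; \frac{\beta - (1-\alpha p_s)p_g}{\beta} \;=\; \frac{p_g + \alpha p_s(1-p_g) - p_g + \alpha p_s p_g}{\beta} \;=\; \frac{\alpha p_s}{\beta}.
\end{align*}
Substituting this back yields $\bar{\Delta} = \frac{\alpha p_s p_g}{\beta^2} \cdot \frac{\beta^2}{\alpha^2 p_s^2} = \frac{p_g}{\alpha p_s}$, as claimed. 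The entire argument is a short calculation; the only subtlety is the cancellation through $\beta$, so the proof is essentially mechanical once Proposition~\ref{Prop_StateProbRandom} is granted.
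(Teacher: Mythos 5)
Your proof is correct and follows essentially the same route as the paper: both recognize the geometric tail $\mu_n = r^{n-1}\mu_1$ for $n\ge 1$, apply $\sum_{n\ge 1} n r^{n-1} = (1-r)^{-2}$, and substitute $\mu_1 = \alpha p_s p_g/\beta^2$. Your version merely spells out the cancellation $1-r = \alpha p_s/\beta$ and the convergence condition $r<1$, which the paper leaves implicit.
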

	
	\begin{proof}
		The recurrence relation of $\mu_n$ for $n \geq 1$ is geometric: $\mu_n = r^{n-1} \mu_1$, where $r = \frac{(1 - \alpha p_s)p_g}{\beta}$. Thus, the expected steady-state value in \eqref{eqn_MeanVAoI} is given by $\bar{\Delta} = \frac{\mu_1}{(1 - r)^2}$. Noting that $\mu_1 = \frac{\alpha p_s p_g}{\beta^2}$, the average VAoI is obtained as $\bar{\Delta} = \frac{p_g}{\alpha p_s}$.
	\end{proof}

	\subsection{VAoI of uniform policy}
	
	Under the uniform policy, the Markov chain’s transition matrix is not stationary; rather, it evolves periodically with period $D$. Consequently, a steady-state distribution does not exist. However, this periodicity facilitates the analysis of the long-term average proportion of time spent in state $n$:
	\begin{align}
		\mu_n = \frac{1}{D} \sum_{q=1}^{D} \mu_n^{(q)},\qquad \hfill n \in \{0,1,2,\cdots\},
	\end{align}
	
	Here, $\mu_n^{(q)}$ denotes the steady-state probability of the time-homogeneous Markov chain: $Y^{(q)}(k) = \Delta(kD + q)$, which samples the original VAoI process at time indices spaced at intervals of $D$, starting from phase offset $q \in \{1,2,\dots,D\} $. This $\mu_n$ represents the long-run time average (or steady-state occupancy probability) for the original periodically time-inhomogeneous Markov process~\cite[Sec. 1.8, Theorem 1.8.5]{norris1998markov}.
	
	\begin{proposition}
		\label{Prop_StateProbUniform}
		The steady-state occupancy probability of VAoI being in state $n$ under a uniform policy with transmission interval $D$ is given by $\mu_n = \frac{1}{D} \sum_{q=1}^{D} \mu_n^{(q)}$, where:
		\begin{align}
			\mu^{(q)}_n = 
			\begin{cases}
				\frac{p_sb^{q}_0}{\beta}, & n=0, \\
				\frac{1}{\beta} \left [ (1-p_s) \sum_{i=1}^{n} b^{D}_{i}\mu_{n-i}^{(q)} + p_sb^{q}_n \right],& 1 \leq n \leq q, \\
				\frac{1}{\beta} \left [ (1-p_s) \sum_{i=1}^{n} b^{D}_{i}\mu_{n-i}^{(q)} \right],& q < n \leq D, \\
				\frac{1}{\beta} \left [ (1-p_s) \sum_{i=1}^{D} b^{D}_{i}\mu_{n-i}^{(q)} \right],& n \geq D+1,
			\end{cases}
		\end{align}
		with $b^{q}_z=\binom{q}{z}p_g^z (1-p_g)^{q-z}$ and $\beta=1-(1-p_s)b^D_0$.
	\end{proposition}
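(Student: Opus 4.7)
Fix a phase $q \in \{1, \dots, D\}$ and consider the subsampled process $Y^{(q)}(k) = \Delta(kD+q)$. Because the uniform schedule has period $D$, $Y^{(q)}$ is a time-homogeneous Markov chain, and the ergodicity argument in the preamble of this section yields a unique stationary distribution $\{\mu_n^{(q)}\}$; the Cesaro-type time-average identity from \cite[Sec.~1.8, Thm.~1.8.5]{norris1998markov} then gives the occupancy $\mu_n = \frac{1}{D}\sum_{q=1}^{D}\mu_n^{(q)}$.

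The key step is to derive the one-step kernel of $Y^{(q)}$. Between consecutive observations at $kD+q$ and $(k+1)D+q$, the schedule attempts exactly one transmission, at $t=(k+1)D$, while the source generates at most one new version per slot with probability $p_g$. Conditional on $Y^{(q)}(k) = m$, two outcomes are possible: with probability $p_s$ the transmission succeeds, $V_R$ is refreshed to $V_S((k+1)D)$, and only the $q$ post-transmission generations contribute, so $Y^{(q)}(k+1) \sim \mathrm{Binomial}(q, p_g)$, independent of $m$; with probability $1-p_s$ the transmission fails, $V_R$ is unchanged throughout the cycle, and $Y^{(q)}(k+1) = m + Z$ with $Z \sim \mathrm{Binomial}(D, p_g)$. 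Writing $b_z^r = \binom{r}{z}p_g^z(1-p_g)^{r-z}$ with the convention $b_z^r = 0$ for $z > r$, this gives $\mathbb{P}(Y^{(q)}(k+1)=n \mid Y^{(q)}(k)=m) = p_s\, b_n^q + (1-p_s)\, b_{n-m}^D$.

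Plugging this kernel into the balance equation and using $\sum_m \mu_m^{(q)}=1$ yields $\mu_n^{(q)} = p_s\, b_n^q + (1-p_s)\sum_{m=0}^{n} b_{n-m}^D\, \mu_m^{(q)}$. Isolating the $m=n$ summand on the left and substituting $i = n-m$, with $\beta = 1-(1-p_s)b_0^D$, gives the recursion $\beta\mu_n^{(q)} = p_s\, b_n^q + (1-p_s)\sum_{i=1}^{n} b_i^D\, \mu_{n-i}^{(q)}$. The four cases in the proposition then follow directly from the supports of the binomials: at $n=0$ the convolution vanishes; the term $p_s b_n^q$ disappears once $n>q$; and once $n\geq D+1$ the convolution truncates at $i=D$ because $b_i^D=0$ for $i>D$.

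The main obstacle will be the precise placement of the single transmission within each cycle and the resulting distinction between the pre- and post-transmission version generations; once the transition kernel is established, the derivation reduces to an isolation step in the balance equation and a purely support-driven case split.
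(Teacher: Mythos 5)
Your proof is correct and takes essentially the same route as the paper: you derive the identical two-branch transition kernel for the subsampled chain $Y^{(q)}$ (a successful transmission at $t=(k+1)D$ resets the lag to a $\mathrm{Binomial}(q,p_g)$ count from the $q$ post-transmission slots, while a failure adds an independent $\mathrm{Binomial}(D,p_g)$ increment), then write the balance/convolution equations, use the normalization $\sum_m \mu_m^{(q)}=1$, and obtain the four cases by isolating the $i=0$ term and tracking the binomial supports. No gaps.
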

	
	\begin{proof}
		The proof is provided in Appendix \ref{Appen_Proof_UniformPolicy_SSProbs}.
	\end{proof}
	
	Proposition \ref{Prop_StateProbUniform} provides recursive equations for calculating the stationary distribution of the VAoI under the uniform policy. Using these probabilities, the average VAoI $\bar{\Delta}$ can be calculated via \eqref{eqn_MeanVAoI}.

	\subsection{VAoI of threshold policy}
	\begin{proposition}
		\label{Prop_StateProbThr}
		The steady-state probability of VAoI being in state $n$ under a threshold policy with threshold $\Delta_T$ is given by:
		\begin{itemize}
			\item For $\Delta_T \in \{0,1\}$: The same as the randomized stationary policy with $\alpha = 1$, as presented in Proposition \ref{Prop_StateProbRandom}.
			\item For $\Delta_T \geq 2$:
			\begin{align}
				\mu_n =
				\begin{cases}
					\frac{p_s(1-p_g)}{(\Delta_T-1)p_s+\beta}, & n=0,\\
					\frac{p_s}{(\Delta_T-1)p_s+\beta}, & 1 \leq n \leq \Delta_T\!-\!1, \\
					\frac{p_g}{\beta} \mu_{\Delta_T-1}, & n=\Delta_T,\\
					\left [\frac{(1-p_s)p_g}{\beta} \right]^{n-\Delta_T} \mu_{\Delta_T},& n \geq \Delta_T \!+\! 1,
				\end{cases}
			\end{align}
		\end{itemize}
		where $\beta=1-(1-p_s)(1-p_g)$.
	\end{proposition}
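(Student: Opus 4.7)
The plan is to set up the one-step transition probabilities for the induced \textsc{dtmc} under the threshold policy, split into two regimes, and then solve the resulting balance equations in a bottom-up fashion, closing the system with the normalization condition.

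First I would handle the degenerate cases $\Delta_T \in \{0,1\}$ by a coupling argument. For $\Delta_T=0$ the policy transmits in every slot, which is exactly the randomized stationary policy with $\alpha=1$. For $\Delta_T=1$ the only difference from the always-transmit rule is that no transmission is attempted in state $0$, but from state $0$ any transmission is redundant (source and receiver share the same version), so the one-step transition kernel is identical to that of $\alpha=1$. Hence Proposition~\ref{Prop_StateProbRandom} applies verbatim.

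For the main case $\Delta_T\geq 2$, I would list the transitions explicitly. Below the threshold (states $n<\Delta_T$) only the Bernoulli version-generation process drives the chain, giving $n\to n$ w.p.\ $1-p_g$ and $n\to n+1$ w.p.\ $p_g$. At or above the threshold (states $n\geq\Delta_T$) transmission is attempted, so $n\to 0$ w.p.\ $p_s(1-p_g)$, $n\to 1$ w.p.\ $p_s p_g$, $n\to n$ w.p.\ $(1-p_s)(1-p_g)$, and $n\to n+1$ w.p.\ $(1-p_s)p_g$. I would then write the balance equation \eqref{eq_Balance} in each regime. The state-0 equation reduces to $p_g\mu_0 = p_s(1-p_g)M$ with $M:=\sum_{n\geq\Delta_T}\mu_n$. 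For $1\leq n\leq \Delta_T-1$ the only incoming transitions are from $n-1$ (new version) and from $n$ itself (no new version), which immediately yields $\mu_n=\mu_{n-1}$, so the chain has a \emph{flat plateau} $\mu_1=\mu_2=\cdots=\mu_{\Delta_T-1}$. At the threshold, the balance gives $\beta\mu_{\Delta_T}=p_g\mu_{\Delta_T-1}$, and for $n\geq\Delta_T+1$ the same kind of argument gives the geometric decay with ratio $(1-p_s)p_g/\beta$.

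The coupling step to pin down the absolute level is where the small subtlety lies. Combining the state-$0$ and state-$1$ balances I would eliminate $M$ via $p_s M=p_g\mu_0/(1-p_g)$, which yields the compact relation $\mu_1=\mu_0/(1-p_g)$. The last step is the normalization. Writing the total mass as $\mu_0+(\Delta_T-1)\mu_1+\mu_{\Delta_T}\sum_{k\geq 0}[(1-p_s)p_g/\beta]^k$ and using the identity $\beta-(1-p_s)p_g=p_s$ to simplify the geometric sum to $\beta/p_s$, the tail contributes $(p_g/p_s)\mu_1$; collecting everything gives a denominator $\Delta_T p_s+p_g(1-p_s)=(\Delta_T-1)p_s+\beta$, from which the closed-form expressions for $\mu_0$ and $\mu_1$ in the statement follow, and the remaining formulas are obtained by back-substitution. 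The main obstacle I expect is bookkeeping: keeping the regime boundaries (state $0$, the plateau, state $\Delta_T$, the geometric tail) straight and correctly identifying which transitions feed each balance equation; the algebra itself collapses once the two identities $\beta+(1-p_s)(1-p_g)=1$ and $\beta-(1-p_s)p_g=p_s$ are recognized.
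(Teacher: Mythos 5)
Your proposal is correct and follows essentially the same route as the paper: enumerate the two-regime transition kernel of the induced \textsc{dtmc}, solve the balance equations bottom-up (flat plateau below the threshold, geometric tail above, with $\mu_1=\mu_0/(1-p_g)$ fixing the absolute level), and close with normalization using $\beta-(1-p_s)p_g=p_s$. One small slip: the claim that for $1\leq n\leq\Delta_T-1$ the only inflow to state $n$ is from $n-1$ and $n$ itself would force $\mu_1=\mu_0$, which is false; the plateau relation $\mu_n=\mu_{n-1}$ holds only for $2\leq n\leq\Delta_T-1$, since state $1$ also receives the reset mass $p_sp_g\sum_{i\geq\Delta_T}\mu_i$ --- as your own subsequent state-$1$ balance correctly accounts for.
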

	
	\begin{proof}
		The proof is provided in Appendix \ref{Appen_Proof_ThrPolicy_SSProbs}.
	\end{proof}
	
	\begin{lemma}
		\label{Lemma_AvgVAoIThreshold}
		The average VAoI under a threshold policy with threshold $\Delta_T$ is given by:
		\begin{align}
			\label{eqn_AvgVAoIThreshold}
			\bar{\Delta}_{(\Delta_T)} = \frac{1}{2} \frac{(\Delta_T\!-\!1)\Delta_T p_s}{(\Delta_T\!-\!1)p_s + \beta} + \frac{p_g}{p_s},
		\end{align}
		where $\beta=1-(1-p_s)(1-p_g)$.
	\end{lemma}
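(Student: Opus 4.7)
The plan is to compute $\bar{\Delta}_{(\Delta_T)} = \sum_{n=0}^{\infty} n \mu_n$ directly from the stationary distribution provided in Proposition~\ref{Prop_StateProbThr}, splitting the sum at $n = \Delta_T$ into a finite part and a geometric tail. I first treat the case $\Delta_T \geq 2$ and then verify that the resulting formula reduces correctly for $\Delta_T \in \{0,1\}$, matching Lemma~\ref{Lemma_AvgVAoIRandomized} evaluated at $\alpha = 1$.

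For the finite block $1 \leq n \leq \Delta_T - 1$, each $\mu_n$ equals the constant $\frac{p_s}{(\Delta_T-1)p_s + \beta}$, so this part contributes
\begin{equation*}
\frac{p_s}{(\Delta_T-1)p_s + \beta}\sum_{n=1}^{\Delta_T-1} n \;=\; \frac{(\Delta_T-1)\Delta_T p_s}{2\bigl[(\Delta_T-1)p_s + \beta\bigr]},
\end{equation*}
which is already the first term in the claimed expression. For the geometric tail $n \geq \Delta_T$, I set $r = \frac{(1-p_s)p_g}{\beta}$ so that $\mu_n = r^{n-\Delta_T}\mu_{\Delta_T}$, and use the standard identities $\sum_{k\geq 0} r^k = \frac{1}{1-r}$ and $\sum_{k\geq 0} k r^k = \frac{r}{(1-r)^2}$ to write $\sum_{n\geq \Delta_T} n\mu_n = \mu_{\Delta_T}\bigl[\frac{\Delta_T}{1-r} + \frac{r}{(1-r)^2}\bigr]$.

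The key algebraic simplification—and the step I expect to carry the weight of the argument—is the identity $\beta = p_s + (1-p_s)p_g$, obtained by expanding $1-(1-p_s)(1-p_g)$. This yields $1-r = p_s/\beta$ and, after substituting $\mu_{\Delta_T} = \frac{p_g}{\beta}\cdot \frac{p_s}{(\Delta_T-1)p_s+\beta}$, collapses the tail to
\begin{equation*}
\frac{p_g}{(\Delta_T-1)p_s + \beta}\cdot\frac{\Delta_T p_s + (1-p_s)p_g}{p_s}.
\end{equation*}
Because $\Delta_T p_s + (1-p_s)p_g = (\Delta_T-1)p_s + \beta$ (again using $\beta - p_s = (1-p_s)p_g$), this tail term simplifies exactly to the $\Delta_T$-independent quantity $p_g/p_s$. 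Adding it to the finite block yields the claimed formula \eqref{eqn_AvgVAoIThreshold}.

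Finally, I check the boundary cases: plugging $\Delta_T = 1$ into \eqref{eqn_AvgVAoIThreshold} gives $\bar{\Delta} = p_g/p_s$, which coincides with the randomized stationary policy at $\alpha = 1$ from Lemma~\ref{Lemma_AvgVAoIRandomized}; the case $\Delta_T = 0$ is identical since the chain never leaves the $\{0\}$-transmit regime. The main obstacle is purely algebraic—recognizing the cancellation $\Delta_T p_s + (1-p_s)p_g = (\Delta_T-1)p_s + \beta$—since missing this identity leaves a tail expression that looks $\Delta_T$-dependent but is in fact constant.
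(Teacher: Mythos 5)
Your proposal is correct and follows essentially the same route as the paper: both compute $\bar{\Delta}_{(\Delta_T)}=\sum_n n\mu_n$ from Proposition~\ref{Prop_StateProbThr} by summing the uniform block $1\le n\le \Delta_T-1$ and the geometric tail with ratio $r=(1-p_s)p_g/\beta$, and both handle $\Delta_T\in\{0,1\}$ via Lemma~\ref{Lemma_AvgVAoIRandomized} with $\alpha=1$. The only difference is that you carry out explicitly the "algebraic manipulation" the paper leaves implicit, correctly isolating the cancellation $\Delta_T p_s+(1-p_s)p_g=(\Delta_T-1)p_s+\beta$ that collapses the tail to $p_g/p_s$.
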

	
	\begin{proof}
		
		For $\Delta_T = 0$ and $\Delta_T = 1$, the average VAoI is identical, as shown in the proof of Proposition \ref{Prop_StateProbThr}, and equals the average VAoI in Lemma \ref{Lemma_AvgVAoIRandomized} with $\alpha = 1$, i.e., $\bar{\Delta}_{(\Delta_T=0)} \!=\! \bar{\Delta}_{(\Delta_T=1)} \!=\! \frac{p_g}{p_s}$. For $\Delta_T \geq 2$, using the steady-state probabilities from Proposition \ref{Prop_StateProbThr}, the expected VAoI, $\bar{\Delta}_{(\Delta_T)} \!=\! \sum_{n=0}^\infty n \mu_n$, is given by:
		\begin{align}
			\bar{\Delta}_{(\Delta_T)} 
			&\!\!=\! \mu_{\Delta_T-1} \Bigg\{\!\! \sum_{n=1}^{\Delta_T-1} \!\! n
			\!+\! \frac{p_g}{\beta} \Delta_T \!+\! 
			\! \frac{p_g}{\beta} \left[ \! \frac{r}{(1\!-\!r)^2} \!+\! \Delta_T \frac{r}{1\!-\!r} \!\right]  \!\! \Bigg\}, \notag
		\end{align}
		where $r = \frac{(1-p_s)p_g}{\beta}$, and after some algebraic manipulation, the final expression for the average VAoI \eqref{eqn_AvgVAoIThreshold} is obtained.
	\end{proof}
	
	\begin{theorem}
		\label{Theorem_OptimalThreshold}
		The optimal threshold-based policy minimizing the average VAoI under the rate constraint \eqref{eq_constraint} is a randomized mixture of two threshold policies with thresholds $\Delta_T^\ast$ and $\Delta_T^\ast - 1$, applied with probabilities $\gamma$ and $1 - \gamma$, respectively. The optimal threshold is:
        \vspace{-4pt}
		\begin{align}
			\label{eqn_OptimalThreshold}
			\Delta_T^\ast= \Bigg\lceil \frac{p_g}{p_s}\left( \frac{1}{\alpha} - 1 +p_s\right) \!\! \Bigg\rceil,
		\end{align}
		and the corresponding mixing probability $\gamma$ is:
		\begin{align}
			\label{Optimal_gamma}
			\gamma = \frac{R(\Delta_T^\ast \!-\! 1) - \alpha}{R(\Delta_T^\ast \!-\! 1) - R(\Delta_T^\ast)},
		\end{align}
		where $R(\Delta_T) = \frac{p_g}{(\Delta_T-1)p_s+\beta}$ for $\Delta_T \geq 1$, and $R(0) = 1$.
	\end{theorem}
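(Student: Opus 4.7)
The plan is to reduce the constrained minimization to a one-dimensional integer search over the threshold $\Delta_T$ by combining the closed-form average VAoI of Lemma~\ref{Lemma_AvgVAoIThreshold} with an explicit transmission-rate formula, and then to handle the integrality gap by randomizing between two adjacent thresholds.

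First I would derive the long-run transmission rate $R(\Delta_T)$ of a deterministic threshold policy, which equals $\sum_{n\ge \Delta_T}\mu_n$. Using Proposition~\ref{Prop_StateProbThr} and summing the geometric tail starting at $\mu_{\Delta_T}$ with common ratio $(1-p_s)p_g/\beta$ gives $\sum_{n\ge \Delta_T}\mu_n = \mu_{\Delta_T}\,\beta/p_s$; substituting $\mu_{\Delta_T}=(p_g/\beta)\,\mu_{\Delta_T-1}$ with $\mu_{\Delta_T-1}=p_s/[(\Delta_T-1)p_s+\beta]$ yields exactly $R(\Delta_T)=p_g/[(\Delta_T-1)p_s+\beta]$ for $\Delta_T\ge 1$, while $R(0)=1$ corresponds to always transmitting. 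The function $R(\Delta_T)$ is strictly decreasing in $\Delta_T$, and $\bar{\Delta}_{(\Delta_T)}$ from Lemma~\ref{Lemma_AvgVAoIThreshold} is strictly increasing (direct verification of the increment $\bar{\Delta}_{(\Delta_T+1)}-\bar{\Delta}_{(\Delta_T)}>0$ after placing the two fractions over a common denominator). Hence, among feasible deterministic thresholds, the smallest one is optimal.

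Second, I would solve $R(\Delta_T^\ast)\le \alpha$ for the smallest admissible integer. Substituting $\beta = p_s + p_g(1-p_s)$ and rearranging gives $\Delta_T^\ast \ge \tfrac{p_g}{p_s}\bigl(\tfrac{1}{\alpha}-1+p_s\bigr)$, which produces the ceiling in \eqref{eqn_OptimalThreshold}. Third, because $\alpha$ is continuous while thresholds are integer-valued, generically $R(\Delta_T^\ast)<\alpha<R(\Delta_T^\ast-1)$, so no pure threshold policy saturates the rate budget. By the single-constraint \textsc{cmdp} result underlying Appendix~\ref{Sec_CMDP}, the optimal policy is a randomized mixture of at most two deterministic threshold policies. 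I would take the natural adjacent pair $\{\Delta_T^\ast-1,\Delta_T^\ast\}$, impose the averaged rate equality $\gamma R(\Delta_T^\ast)+(1-\gamma)R(\Delta_T^\ast-1)=\alpha$, and solve for $\gamma$ to recover \eqref{Optimal_gamma}.

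The main obstacle is justifying that mixing the \emph{adjacent} thresholds $\Delta_T^\ast-1$ and $\Delta_T^\ast$ (rather than some farther-apart pair $\{k_1,k_2\}$ bracketing $\alpha$) is globally optimal. The cleanest route is to show that the points $\{(R(k),\bar{\Delta}_{(k)})\}_{k\ge 0}$ trace a convex lower Pareto frontier in the rate--VAoI plane, so that the envelope evaluated at rate $\alpha$ coincides with the line segment joining the two nearest-bracketing points; together with the \textsc{cmdp} Lagrangian argument in Appendix~\ref{Sec_CMDP}, which guarantees that the rate constraint is tight at optimality, this pins down the optimal mixture and completes the proof.
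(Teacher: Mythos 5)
Your proposal follows essentially the same route as the paper: compute $R(\Delta_T)=\sum_{n\ge\Delta_T}\mu_n$ from Proposition~\ref{Prop_StateProbThr}, observe that $R$ is decreasing while $\bar{\Delta}_{(\Delta_T)}$ is increasing so the smallest feasible integer threshold is optimal among deterministic policies, and then randomize between $\Delta_T^\ast$ and $\Delta_T^\ast-1$ to saturate the rate budget. The adjacency issue you flag as the main obstacle is resolved in the paper not via a Pareto-frontier convexity argument but by invoking the single-constraint \textsc{cmdp} result of Beutler and Ross (the optimal policy randomizes in at most one state), which directly forces the two mixed threshold policies to differ only at state $\Delta_T^\ast-1$.
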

	
	\begin{proof}
		The proof is provided in Appendix \ref{Appen_Proof_OptimalThr}.
	\end{proof}
	
	The resulting optimal average VAoI under the \emph{mixed threshold policy} is given by:
	$\bar{\Delta}^\ast = \gamma \bar{\Delta}_{(\Delta_T^\ast)} + \left(1-\gamma\right) \bar{\Delta}_{(\Delta_T^\ast-1)}$,
	where $\bar{\Delta}_{(\Delta_T^\ast)}$ and $\bar{\Delta}_{(\Delta_T^\ast-1)}$ are obtained from \eqref{eqn_AvgVAoIThreshold}.
    
    \emph{Remark:} In a highly constrained system where $\alpha$ is sufficiently small, the optimal threshold \eqref{eqn_OptimalThreshold} grows to $\lceil \frac{p_g}{\alpha p_s} \rceil$, and the resulting average VAoI \eqref{eqn_AvgVAoIThreshold} approaches $\frac{p_g}{2\alpha p_s}$, which is half that of the randomized stationary policy, $\frac{p_g}{\alpha p_s}$.

    \begin{table*}[!t]
		\centering
		\caption{Average VAoI at node $N\!+\!1$ for various update policies.}
		\begin{tabular}{|>{\centering\arraybackslash}m{1.7cm}|>{\centering\arraybackslash}m{5.95cm}|>{\centering\arraybackslash}m{3.95cm}|>{\centering\arraybackslash}m{4.65cm}|}
			\hline
            \rule{0pt}{2ex}
			 Update policy & {Randomized stationary (transmission probability $\alpha$}) & {Uniform (transmission interval $D$)} & {Optimal threshold-based (threshold $\Delta_T$)}
			\\
			\hline
            \rule{0pt}{3ex}
            $\bar{\Delta}_{N+1}$ & 
			$\frac{p_g}{\alpha p_s} + p_g \sum_{i=1}^{N} \frac{1}{\rho_i}$ & 
			$\sum_{n = 0}^{\infty} n \mu_n + p_g \sum_{i=1}^{N} \frac{1}{\rho_i}$ & 
			$\gamma \bar{\Delta}_{(\!\Delta_T^\ast\!)} \!+\! \left(1\!-\!\gamma\right) \bar{\Delta}_{(\!\Delta_T^\ast-1\!)} \!+\! p_g \! \sum_{i=1}^{N} \!\! \frac{1}{\rho_i}$ 
			\\
			\hline
		\end{tabular}
		\label{tab_AvgVAoI_allPolicies}
        \vspace{-1.5em}
	\end{table*}
	
	\section{Analysis of VAoI in a Multi-hop Setup}
	\label{Sec_MultiHop}
	
	We first demonstrate that the VAoI at each node can be expressed in terms of the VAoI at the preceding node. 
	
	\begin{proposition}
		\label{Prop_VAoInodei}
		The VAoI at node $i\!+\!1$ is given by:
		\begin{align}
			\label{eqn_VAoInodei}
			\Delta_{i+1}(t) &= \Delta_{i}(t-m_i) + \eta_{m_i}, \quad i=1,2,\dots,N,
		\end{align}
		where $m_i$ is a Geometric Random Variable (RV) with parameter $\rho_i$, and  $\eta_k$, for a given $k$, is a Binomial RV with parameters $k$ and $p_g$, for $i \in \{1, 2, \dots, N\}$ and $k \in \{0, 1, 2, \dots\}$:
		\begin{align}
			\mathbb{P}\!\left(m_i\!=\!\mathscr{\ell}\right)&\!=\!(1\!-\!\rho_i)^{\mathscr{\ell}-1}\rho_i, \quad \hfill \mathscr{\ell}=1,2,\dots. \label{eq_GeometricPMF} \\
            \mathbb{P}\!\left(\eta_k\!=\!r \!\mid\! k \right)&\!=\!\binom{k}{r}p_g^{r}(1-p_g)^{k-r},\quad r=0,1,\dots,k. \label{eq_BinomialPMF} 
		\end{align}
	\end{proposition}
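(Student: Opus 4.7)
\textbf{Proof proposal for Proposition~\ref{Prop_VAoInodei}.}

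The plan is to trace back from time $t$ to the most recent successful hand-off on link $(i,i{+}1)$ and then split the VAoI increment into a \emph{relay-induced} part (how far node $i$ has progressed since then) and a \emph{source-induced} part (how many new versions the source has produced in the meantime). First I would define $m_i$ as the elapsed slot count from $t$ back to the latest slot on which node~$i$ successfully delivered its stored version to node~$i{+}1$. Since each relay perpetually forwards its freshest version (as stated in the system-model footnote), every slot on link $(i,i{+}1)$ is an independent Bernoulli trial with success probability $\rho_i$, so $m_i$ is the waiting time to the most recent success in a reverse Bernoulli process, which is Geometric$(\rho_i)$ as in \eqref{eq_GeometricPMF}.

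Second, I would use the node update rule $V_{i+1}(t)=V_i(t-m_i)$, which follows directly from the transmission dynamics: at the instant of the last successful hop, node $i{+}1$ copied whatever version node $i$ was holding, and no successful update has occurred since, so it still stores that version. Writing the definition $\Delta_{i+1}(t)=V_S(t)-V_{i+1}(t)$ and inserting and subtracting $V_S(t-m_i)$ yields
\begin{align}
\Delta_{i+1}(t) \;=\; \underbrace{\bigl[V_S(t-m_i)-V_i(t-m_i)\bigr]}_{\Delta_i(t-m_i)} \;+\; \underbrace{\bigl[V_S(t)-V_S(t-m_i)\bigr]}_{\eta_{m_i}}, \notag
\end{align}
which already matches the target decomposition \eqref{eqn_VAoInodei}.

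Third, I would identify the distribution of $\eta_{m_i}$. Conditioned on $m_i=k$, the term $V_S(t)-V_S(t-m_i)$ counts how many of the $k$ slots in $(t-k,\,t]$ produced a new source version. Since each slot generates a new version independently with probability $p_g$ (Bernoulli version-generation process), this count is Binomial$(k,p_g)$, giving exactly \eqref{eq_BinomialPMF}.

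The subtle step, and the one I would treat most carefully, is justifying that $m_i$ is independent of both $\Delta_i(t-m_i)$ and of the source increments used to define $\eta_{m_i}$. This uses the fact that the channel successes on link $(i,i{+}1)$ are generated by an exogenous Bernoulli process that is independent of the source's Bernoulli version-generation process and of the channel processes on all upstream links $(0,1),\dots,(i{-}1,i)$; hence the backward hitting time $m_i$ depends only on link $(i,i{+}1)$ and is independent of the events that determine $\Delta_i(\cdot)$ and $V_S(\cdot)$ at and before time $t-m_i$. With this independence, the conditional binomial statement in \eqref{eq_BinomialPMF} can be stated as a pure distributional fact for $\eta_k$, completing the proof.
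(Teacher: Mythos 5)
Your proposal is correct and follows essentially the same route as the paper: trace back to the most recent successful delivery on link $(i,i{+}1)$, identify that backward waiting time $m_i$ as Geometric$(\rho_i)$, and split $\Delta_{i+1}(t)=V_S(t)-V_{i+1}(t)$ by inserting $V_S(t-m_i)$ into the lagged term $\Delta_i(t-m_i)$ plus the Binomial count of source generations over those $m_i$ slots. The only difference is presentational—the paper works through the cases $m_i=1,2$ before generalizing, whereas you argue the general case directly and also make explicit the independence of $m_i$ from the source process and upstream links, a point the paper leaves implicit.
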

	
	\begin{proof}
		The proof is provided in Appendix \ref{Appen_Proof_VAoInodei}.
	\end{proof}
	
	\begin{lemma}
		\label{Lemma_VAoIDestNode}
		The VAoI at the destination node is given by:
		\begin{align}
			\label{eq_VAoInodeNp1}
			\Delta_{N+1}(t)=\Delta_{1} (t-\tau_{N})+\beta_{N},
		\end{align}
		where $\tau_{N} = \sum_{i=1}^{N} \!m_{i}$ and $\beta_{N}=\sum_{i=1}^{N} \!\eta_{m_i}$ are two RVs with expected values 
		$\mathbb{E} \left[\tau_N \right] = \sum_{i=1}^{N} \frac{1}{\rho_i}$ and $\mathbb{E} \left[\beta_N \right] = p_g \sum_{i=1}^{N} \frac{1}{\rho_i}$.
	\end{lemma}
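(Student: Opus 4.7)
The plan is to obtain \eqref{eq_VAoInodeNp1} by iterating Proposition \ref{Prop_VAoInodei} down the chain of nodes and then computing the two expected values by linearity and the tower property. I would do this by induction on the number of hops $N$.

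\textbf{Base case and inductive step.} For $N=1$, the identity \eqref{eq_VAoInodeNp1} reduces directly to \eqref{eqn_VAoInodei} with $i=1$, so $\tau_1=m_1$ and $\beta_1=\eta_{m_1}$. For the inductive step, assume that $\Delta_{N}(t)=\Delta_{1}(t-\tau_{N-1})+\beta_{N-1}$ with $\tau_{N-1}=\sum_{i=1}^{N-1} m_i$ and $\beta_{N-1}=\sum_{i=1}^{N-1}\eta_{m_i}$. Applying Proposition \ref{Prop_VAoInodei} at hop $i=N$ gives $\Delta_{N+1}(t)=\Delta_{N}(t-m_N)+\eta_{m_N}$, and substituting the inductive hypothesis, evaluated at $t-m_N$, yields
\begin{align}
\Delta_{N+1}(t) = \Delta_{1}\bigl(t-m_N-\tau_{N-1}\bigr)+\beta_{N-1}+\eta_{m_N} = \Delta_{1}(t-\tau_N)+\beta_N, \notag
\end{align}
which is \eqref{eq_VAoInodeNp1}. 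Here I would briefly note that the $m_i$'s, being determined by disjoint time segments of independent per-slot Bernoulli channel outcomes on distinct links, are mutually independent, so the recursion can be telescoped cleanly.

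\textbf{Expected values.} For $\mathbb{E}[\tau_N]$, linearity of expectation together with $\mathbb{E}[m_i]=1/\rho_i$ (the mean of the geometric pmf in \eqref{eq_GeometricPMF}) immediately gives $\mathbb{E}[\tau_N]=\sum_{i=1}^{N} 1/\rho_i$. For $\mathbb{E}[\beta_N]$, I would use the tower property conditional on $m_i$: since $\eta_{m_i}\mid m_i$ is Binomial$(m_i,p_g)$ by \eqref{eq_BinomialPMF}, one has $\mathbb{E}[\eta_{m_i}\mid m_i]=p_g\, m_i$, hence $\mathbb{E}[\eta_{m_i}]=p_g/\rho_i$, and summing over $i$ yields $\mathbb{E}[\beta_N]=p_g\sum_{i=1}^{N}1/\rho_i$.

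\textbf{Main obstacle.} The only subtle point is the independence claim used to telescope the recursion and to apply linearity cleanly: one must verify that the hop delays $m_i$ and the per-hop version increments $\eta_{m_i}$ depend on disjoint portions of the channel-success and version-generation processes across different links. Since each link's success indicators are independent across time and across links, and the relay sequentially forwards the latest version it holds, the random delay on hop $i$ is a function only of hop-$i$ channel outcomes starting after the preceding delivery; this makes the $m_i$'s independent, and conditionally on the delays, the version increments involve disjoint slots of the source's Bernoulli generation process, which in turn gives the conditional Binomial structure needed for the tower-property computation.
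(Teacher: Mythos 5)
Your proposal is correct and follows essentially the same route as the paper: the paper's proof likewise telescopes the recursion of Proposition \ref{Prop_VAoInodei} from node $1$ to node $N+1$ and then computes $\mathbb{E}[\tau_N]$ by linearity and $\mathbb{E}[\beta_N]$ via the tower rule with $\eta_{m_i}\mid m_i \sim \mathrm{Bin}(m_i,p_g)$. Your added discussion of the independence of the $m_i$'s is a reasonable (and slightly more careful) elaboration of a point the paper leaves implicit.
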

	\begin{proof}
		The proof is provided in Appendix \ref{Appen_Proof_VAoIDestNode}.
	\end{proof}
	\vspace{-2pt}
	The variable $\tau_N$ is the \emph{relaying delay} of each version from node $1$ to node $N\!+\!1$ through $N$ relaying nodes, while $\beta_N$ represents the number of version generations at the source during this delay. Specifically, $\tau_N$ and $\beta_N$ are the sums of independent Geometric and Binomial RVs, $\{m_i\}_{i=1}^{N}$ and $\{\eta_{m_i} \! \mid \! m_i\}_{i=1}^{N}$, respectively. Their probability mass functions (PMFs) can be derived by convolving the PMFs of the individual components. This derivation simplifies in two cases: (1) when $\{\rho_i\}_{i=1}^{N} \!=\! \rho$, $\tau_N$ follows a Negative Binomial distribution, $\tau_N \!\sim\! NegBin(N, \rho)$, representing the number of trials required to achieve the $N$-th success, with $\mathbb{P}(\tau_N \!=\! \ell) \!=\! \binom{\ell \!-\! 1}{N \!-\! 1} \rho^N (1 \!-\! \rho)^{\ell \!-\! N}$ for $\ell \!=\! N, N \!+\! 1, \dots$; and (2) when $N$ is large, by the Central Limit Theorem, $\tau_N \!\sim\! \mathcal{N} \!\left(\sum_{i=1}^{N} \!\!\frac{1}{\rho_i}, \, \sum_{i=1}^{N} \!\!\frac{1 - \rho_i}{\rho_i^2}\right)$. Moreover, since $\{m_i\}_{i=1}^N$ are independent, $\beta_N \!\mid\! \tau_N \sim Bin(\tau_N, p_g)$ always holds.
	
	\begin{theorem}
		\label{Lemma_AvgVAoIlastNode}
		The average VAoI of the receiver node which is $N+1$ hops away from the source is given by:
		\begin{align}
			\label{eqn_AvgVAoIlastNode}
			\bar{\Delta}_{N+1} = \bar{\Delta}_1 + p_g \sum_{i=1}^{N} \frac{1}{\rho_i}. 
		\end{align}
	\end{theorem}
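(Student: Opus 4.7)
The plan is to obtain the result as a direct corollary of Lemma \ref{Lemma_VAoIDestNode}, by taking expectations on both sides of the decomposition
\[
\Delta_{N+1}(t)=\Delta_{1}(t-\tau_{N})+\beta_{N},
\]
and then invoking stationarity of the node-$1$ VAoI process together with the independence structure already identified in the system model. Concretely, I first let $t\to\infty$ so that $\Delta_1(\cdot)$ sits in its stationary regime, yielding
\[
\bar{\Delta}_{N+1}=\mathbb{E}\!\left[\Delta_{1}(t-\tau_{N})\right]+\mathbb{E}[\beta_{N}].
\]

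Next, I would argue that $\tau_N=\sum_{i=1}^{N}m_i$ depends only on the Bernoulli success/failure outcomes of the links between nodes $1,2,\dots,N{+}1$, whereas $\Delta_{1}(s)$ is a function of the version-generation process at the source and of the success outcomes on the hop $0\!\to\!1$ link. These two collections of Bernoulli processes are independent by the system model, so $\tau_N$ is independent of the entire sample path $\{\Delta_1(s)\}_{s\in\mathbb{Z}}$. Conditioning on $\tau_N=\ell$ and using stationarity of node $1$'s VAoI (established in Section \ref{Sec_OneHop} for each of the three update policies),
\[
\mathbb{E}\!\left[\Delta_1(t-\ell)\mid \tau_N=\ell\right]=\mathbb{E}[\Delta_1(t-\ell)]=\bar{\Delta}_1,
\]
and uncondition on $\tau_N$ to conclude $\mathbb{E}[\Delta_1(t-\tau_N)]=\bar{\Delta}_1$.

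Finally, I would substitute $\mathbb{E}[\beta_N]=p_g\sum_{i=1}^{N}\tfrac{1}{\rho_i}$, which is already recorded in Lemma \ref{Lemma_VAoIDestNode} (and follows from the tower rule $\mathbb{E}[\eta_{m_i}]=\mathbb{E}[\mathbb{E}[\eta_{m_i}\mid m_i]]=p_g\mathbb{E}[m_i]=p_g/\rho_i$), to obtain the desired identity. The main conceptual step—rather than any algebraic obstacle—is the independence argument that decouples $\tau_N$ from $\Delta_1$; once this is granted, stationarity and the tower rule give the result immediately. No further computation is required, which is why the claim can be stated without restricting to a particular single-hop update policy: $\bar{\Delta}_1$ on the right-hand side refers to whichever closed-form value was derived in Lemmas \ref{Lemma_AvgVAoIRandomized} and \ref{Lemma_AvgVAoIThreshold} (or evaluated numerically in the uniform case via Proposition \ref{Prop_StateProbUniform}), as summarized in Table \ref{tab_AvgVAoI_allPolicies}.
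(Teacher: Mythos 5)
Your proposal is correct and follows essentially the same route as the paper's proof: take expectations in the decomposition $\Delta_{N+1}(t)=\Delta_{1}(t-\tau_{N})+\beta_{N}$ from Lemma~\ref{Lemma_VAoIDestNode}, condition on $\tau_N$ (the paper's tower-rule step), invoke ergodicity/stationarity of $\Delta_1$ to replace $\mathbb{E}[\Delta_1(t-\tau)]$ by $\bar{\Delta}_1$, and substitute $\mathbb{E}[\beta_N]=p_g\sum_{i=1}^{N}\tfrac{1}{\rho_i}$. The only difference is that you make explicit the independence of $\tau_N$ from the node-$1$ VAoI sample path, which the paper's conditioning step uses implicitly.
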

	
	\begin{proof}
		The proof is provided in Appendix \ref{Appen_Proof_AvgVAoIlastNode}.
	\end{proof}
	
	Using the single-hop update policies from Section \ref{Sec_OneHop} and substituting the respective average VAoI expressions for $\Delta_1$ in \eqref{eqn_AvgVAoIlastNode}, the average VAoI at the final node $N+1$ is presented in Table \ref{tab_AvgVAoI_allPolicies}. In this table, $\mu_n$ for the uniform policy follows from Proposition \ref{Prop_StateProbUniform}, while $\bar{\Delta}_{(\Delta_T)}$, $\Delta_T^\ast$, and $\gamma$ for the optimal threshold policy are defined in \eqref{eqn_AvgVAoIThreshold}–\eqref{Optimal_gamma}.

	\vspace{-1pt}
	\section{Numerical Results}
	\label{Sec_NumericResults}
	We evaluate the analytical results, starting with the single-hop setup under various update policies and extending the analysis to the multi-hop setup. Simulations were conducted over $10^4$ time slots, and the results were averaged over $400$ Monte Carlo iterations to obtain the steady-state values.

    \vspace{-3pt}
	\subsection{Single-hop Setup}
	We numerically validate the analytical results for the stationary distribution and the average VAoI presented in Section~\ref{Sec_OneHop}. Specifically, we compute the fraction of time slots in which $\Delta_1$ (or $\Delta$ in Section~\ref{Sec_OneHop}) equals $n$, for $n \in \{0,1,2,\ldots\}$, under randomized stationary, uniform, and threshold policies. Parameters are set to $p_s = 0.8$ and $p_g = 0.3$. The simulation results perfectly match and validate the analytical results in Propositions~\ref{Prop_StateProbRandom}, \ref{Prop_StateProbUniform}, and \ref{Prop_StateProbThr}, as illustrated in Fig.~\ref{fig_SS_probs_D4} for $\alpha = 0.25$.
	
	For further evaluation, Fig. \ref{fig_SS_probs_D20} shows the stationary distributions for $\alpha = 0.05$, corresponding to a more constrained average update rate.\footnote{Simulation curves are omitted from the remaining results to eliminate redundancy and enhance clarity, as they replicate the analytical results exactly.} Under stricter rate constraints, the VAoI distribution becomes more dispersed, with higher VAoI values occurring with greater probability. The randomized stationary and uniform policies produce smooth distributions with longer tails, whereas the threshold policy exhibits an almost \emph{uniform} distribution with considerably shorter tails. As stated in Proposition \ref{Prop_StateProbThr}, the VAoI distribution $\mu_n$ under the threshold policy is uniform for $1 \!\leq\! n \!\leq\! \Delta_T - 1$, drops by a factor of $\frac{p_g}{\beta}$ at $n = \Delta_T$, and then decays exponentially at rate $(1 - p_s) p_g/\beta$. 
	
	Fig. \ref{fig_AvgVAoI_alpha} presents the average VAoI under different policies. The threshold policy consistently delivers the best performance, while the randomized stationary policy performs worse than the uniform policy. The advantage of the threshold policy lies in its ability to keep VAoI values low and mitigate the occurrence of higher ones. Since the average VAoI is a weighted sum of steady-state probabilities, with larger values contributing more (see Equation~\eqref{eqn_MeanVAoI}), reducing the probability of high VAoI significantly improves performance. This explains why the uniform policy outperforms the randomized stationary policy, which exhibits a higher probability of large VAoI values (Figs.~\ref{fig_SS_probs_D4} and \ref{fig_SS_probs_D20}).

    \begin{figure*}[!t]
		\centering
		\begin{minipage}[b]{0.325\linewidth}
			\centering
			\includegraphics[scale=0.42]{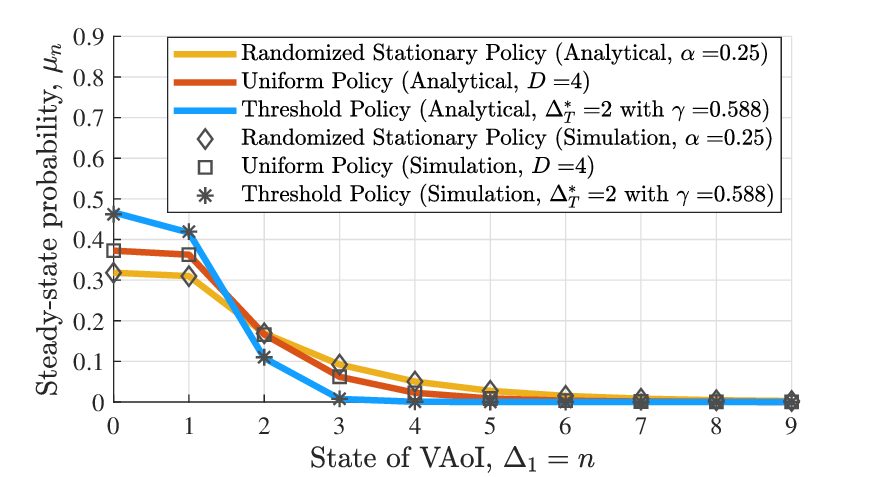} 
			\caption{Stationary distribution of VAoI at node $1$ under various policies for $\alpha = 0.25$.}
			\label{fig_SS_probs_D4}
		\end{minipage}
		\hfill
		\begin{minipage}[b]{0.325\linewidth}
			\centering
			\includegraphics[scale=0.42]{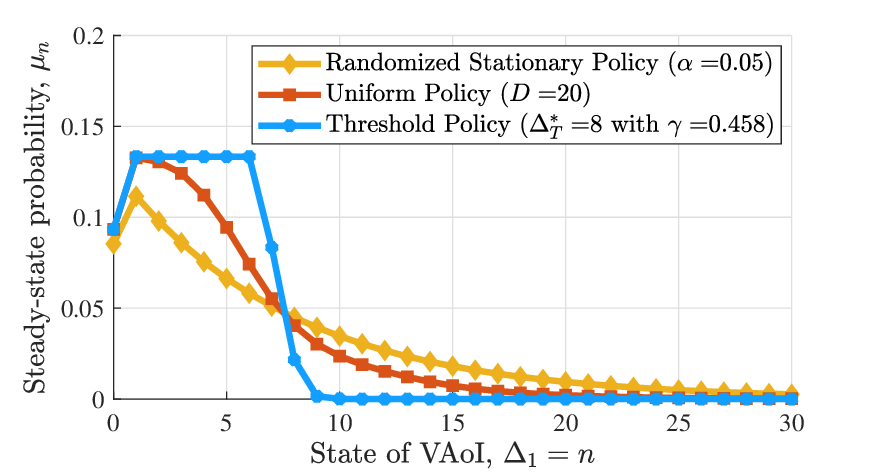} 
			\caption{Stationary distribution of VAoI at node $1$ under various policies for $\alpha = 0.05$.}
			\label{fig_SS_probs_D20}
		\end{minipage}
		\hfill
		\begin{minipage}[b]{0.325\linewidth}
			\centering
			\includegraphics[scale=0.41]{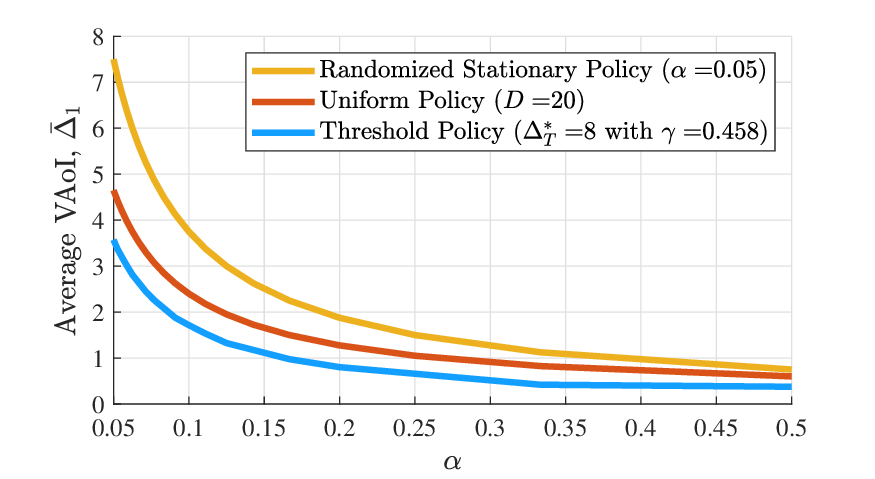} 
			\caption{Average VAoI at node $1$ under various policies for $\alpha = 0.05$.}
			\label{fig_AvgVAoI_alpha}
		\end{minipage}
        \vspace{-1.5em}
	\end{figure*}
	
	Fig. \ref{fig_AvgVAoI_alpha} illustrates that, to maintain a target VAoI of, for instance, $2$, the uniform policy reduces the required update rate by $35\%$ (from $0.188$ to $0.121$), while the optimal threshold policy achieves a $54\%$ reduction (to $0.086$), compared to the randomized policy. These results highlight that \emph{optimal VAoI-aware policies can significantly reduce the transmission rate without compromising the conveyed information, thereby improving energy efficiency in communication networks.}
	
	Fig. \ref{fig_Thresholds_ps_pg} illustrates the contours of the optimal threshold $\Delta_T^\ast$ and the heatmap of the average VAoI under the threshold policy across varying success probabilities $p_s$ and version generation probabilities $p_g$, for $\alpha\!=\!0.05$. Higher $p_g$ and lower $p_s$ lead to a larger average VAoI and a higher optimal threshold. As established in Theorem \ref{Theorem_OptimalThreshold}, the average VAoI increases with $\Delta_T$, where $\Delta_T^\ast$ is given by Equation~\eqref{eqn_OptimalThreshold}. Under stringent rate constraints (i.e., very low $\alpha$), $\Delta_T^\ast$ can be approximated as $\lceil \frac{p_g}{\alpha p_s} \rceil$. Moreover, when $\alpha \!\geq\! \frac{p_g}{\beta}$, we have $\Delta_T^\ast \!=\! 1$, meaning updates occur at all non-zero VAoI states, thereby attaining the minimum achievable average VAoI, $\bar{\Delta} \!=\! \frac{p_g}{p_s}$.

    \begin{figure*}[!t]
		\centering
		\begin{minipage}[b]{0.325\linewidth}
			\centering
			\includegraphics[scale=0.42]{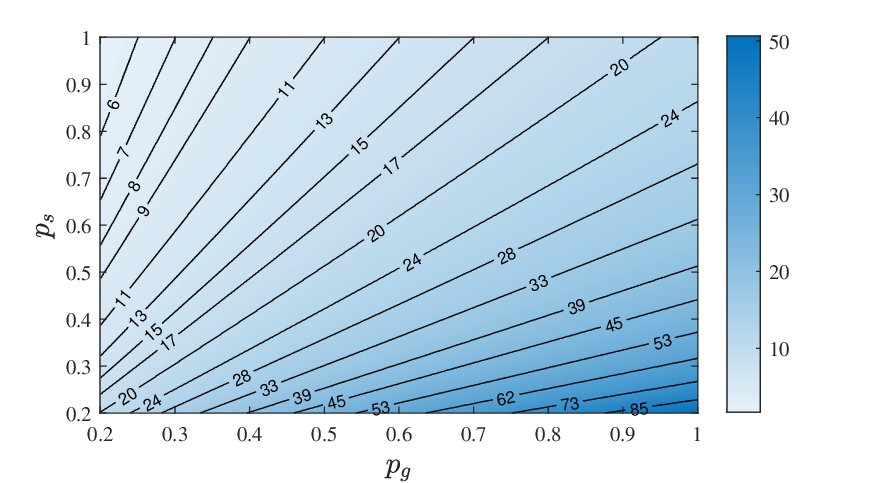} 
			\caption{Contour plot of $\Delta_T^\ast$ and heatmap of $\bar{\Delta}_1$ for the optimal threshold policy versus $(p_g, p_s)$.}
			\label{fig_Thresholds_ps_pg}
		\end{minipage}
		\hfill
		\begin{minipage}[b]{0.325\linewidth}
			\centering
			\includegraphics[scale=0.42]{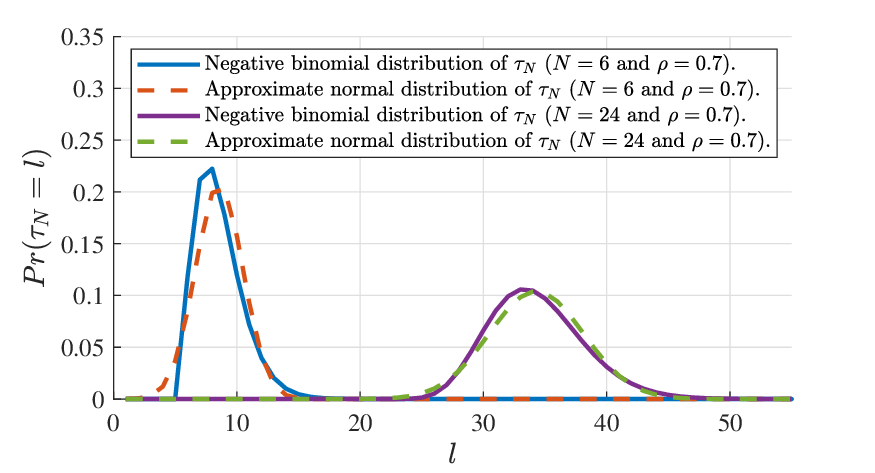} 
			\caption{Distribution of the relaying delay for each version, $\tau_N$.}
			\label{fig_Tau_N_Distribution}
		\end{minipage}
		\hfill
		\begin{minipage}[b]{0.325\linewidth}
			\centering
			\includegraphics[scale=0.39]{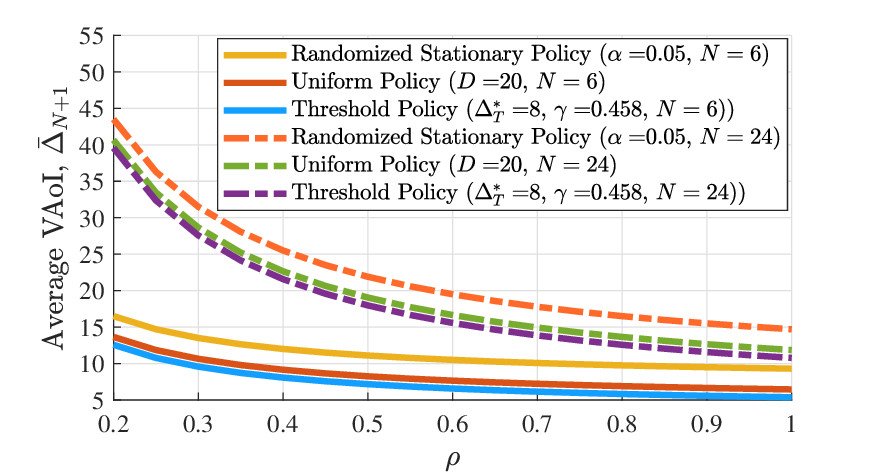} 
			\caption{Average VAoI at node $N\!+\!1$ under various policies ($\alpha = 0.05$, $N=6$ and $24$).}
			\label{fig_AvgVAoI_Np1_rho}
		\end{minipage}
        \vspace{-1.5em}
	\end{figure*}
	
	\subsection{Multi-hop Setup}
	According to Lemma \ref{Lemma_VAoIDestNode}, the VAoI at the destination node $N+1$ in a multi-hop setup with $N$ relays equals the average VAoI at node $1$, shifted by the relaying delay $\tau_N$, plus the number of version generations at the source during this delay, denoted by the Binomial RV $\beta_N$. Consequently, the long-term average VAoI, $\bar{\Delta}_{N+1}$, equals the average VAoI at node $1$, $\bar{\Delta}_1$, plus an offset corresponding to the expected number of version generations during relaying, given by $p_g \sum_{i=1}^{N} \frac{1}{\rho_i}$. The relaying delay $\tau_N$ follows a negative Binomial distribution under ${\rho_i}_{i=1}^N = \rho$. As $N$ increases, this distribution converges to a normal distribution, as illustrated in Fig.~\ref{fig_Tau_N_Distribution} for $\rho = 0.7$ with $N = 6$ and $24$.
	The average VAoI, $\bar{\Delta}_{N+1}$, is depicted in Fig. \ref{fig_AvgVAoI_Np1_rho} for various policies as a function of $\rho$, with parameters $p_s=0.8$, $p_g=0.3$, $\alpha=0.05$, for $N = 6$ and $24$. The average VAoI, $\bar{\Delta}_{N+1}$, exhibits a linear increase with respect to $N$ and a reciprocal polynomial decrease with respect to $\rho$, such that $\bar{\Delta} \propto \frac{N}{\rho}$. That is, as the network size increases, maintaining the average VAoI at the destination requires the incorporation of more reliable links.

	\section{Conclusion}
	\label{sec_Conclusion}
	We analyzed the VAoI in both single-hop and multi-hop networks. Closed-form expressions for the stationary distribution of the VAoI, along with its average, were derived for several rate-constrained transmission policies, including randomized stationary, uniform, and threshold-based. Furthermore, we obtained explicit formulas for the optimal threshold and the resulting optimal VAoI under the threshold policy. Compared to other policies, the optimal threshold policy achieved the lowest average VAoI or maintained a similar VAoI with significantly fewer transmissions.

	\appendices
	
	\section{Optimal On-Off Policy for the CMDP Problem}
	\label{Sec_CMDP}

	Consider an on--off transmission policy $\pi$ that, at each time slot $t$, decides whether to transmit ($a^\pi(t) = 1$) or remain idle ($a^\pi(t) = 0$), i.e., $\pi = \big(a^\pi(0), a^\pi(1), \dots \big)$, to minimize the time-average VAoI under an average update rate constraint, formulated as a \textsc{cmdp}:
    
	\begin{align}
		\label{eq_CMDP}
		\min_{\pi \in \Pi} \ &\lim_{T\rightarrow\infty} {\frac{1}{T} \mathbb{E} \left[ \sum_{t=0}^{T-1} \Delta(t) \Big| s(0) \right]}, \qquad \hfill s.t.\ \eqref{eq_constraint},
	\end{align}
    
    \noindent where $\Pi$ denotes all feasible policies.
    The \textsc{cmdp} is characterized by state $s(t) \in S$, action $a(t) \in A = \{0,1\}$, transition probability $\mathbb{P}(s(t+1) \mid s(t), a(t))$, and transition cost $C(s(t), a(t), s(t+1))$, where, for brevity, the superscript $\pi$ is omitted. The state, $s(t) = \Delta(t)$, denotes the VAoI at the receiver; imposing an upper bound $\Delta_{\text{max}}$ yields a finite state space $S = \{0, 1, \dots, \Delta_{\text{max}}\}$. The transition probabilities are:
	\begin{align}
		\label{eqn_CMDPTransProbs}
		\mathbb{P}\!\left(\Delta^\prime | \Delta,a\right)
		\!=\! 
        \scalebox{0.9}{$
		\begin{cases}
			p_g & a\!=\!0,\  \Delta^\prime\!=\!\Delta\!+\!1,\ \Delta\!\!<\!\!\Delta_{\text{max}},\\
			\bar{p}_g & a\!=\!0,\  \Delta^\prime\!=\!\Delta,\ \Delta\!\!<\!\!\Delta_{\text{max}},\\
			1 & a\!=\!0,\  \Delta^\prime\!=\!\Delta\!=\!\Delta_{\text{max}},\\
			\bar{p}_sp_g & a\!=\!1,\  \Delta^\prime\!=\!\Delta\!+\!1,\ \Delta\!\!<\!\!\Delta_{\text{max}},\\
			\bar{p}_s \bar{p}_g & a\!=\!1,\  \Delta^\prime\!=\!\Delta,\ \Delta\!\!<\!\!\Delta_{\text{max}},\\
			\bar{p}_s & a\!=\!1,\ \Delta^\prime\!=\!\Delta\!=\!\Delta_{\text{max}},\\
			p_sp_g & a\!=\!1,\  \Delta^\prime\!=\!1,\\
			p_s \bar{p}_g & a\!=\!1,\  \Delta^\prime\!=\!0,
		\end{cases} 
        $}
	\end{align}
    where $\bar{p}_s \!=\! 1 \!-\! p_s$ and $\bar{p}_g \!=\! 1 \!-\! p_g$. The transition cost at state $s(t)$ under action $a(t)$ is defined as the resulting VAoI, i.e., $C\!\left(s(t), a(t), s(t\!+\!1)\right) \!=\! \Delta(t\!+\!1)$.
	The primal \textsc{cmdp} problem \eqref{eq_CMDP} can be reformulated as a Lagrangian dual problem by introducing a multiplier $\lambda \geq 0$:
	\begin{align}
		\label{eq_DualMDP}
		\sup_{\lambda \geq 0} {\min_{\pi \in \Pi} \mathcal{L}(\lambda,\pi)},
	\end{align}
	where $\mathcal{L}(\lambda,\pi)$ denotes the Lagrangian function:
	\begin{align}
		\label{eq_LagFunction}
		\mathcal{L}(\lambda,\pi) \!=\! \lim_{T\rightarrow\infty} {\frac{1}{T} \mathbb{E} \!\left[ \sum_{t=0}^{T-1} \left \{ \Delta(t) \!+\! \lambda a(t) \right \} \Big| s(0) \right] \!-\! \lambda \alpha}.
	\end{align}
	
	Here, $g(\lambda) = \mathcal{L}(\lambda, \pi_\lambda^*)$ denotes the dual function, and $\pi_\lambda^*$ is the policy minimizing $\mathcal{L}(\lambda, \pi)$ for fixed $\lambda$:
	\begin{align}
		\label{eqn_DualMDPpolicy}
		\pi^\ast_\lambda = \argmin_{\pi \in \Pi} \lim_{T\rightarrow\infty} { \frac{1}{T} \mathbb{E} \left[ \sum_{t=0}^{T-1} \left \{ \Delta(t) + \lambda a(t) \right \} \Big| s(0) \right] }.
	\end{align}
	This corresponds to solving an unconstrained \textsc{mdp} with a modified transition cost:
	\begin{align}
		\label{eq_LagrangianCostFunction}
		C_\lambda \! \left(s(t),a(t),s(t+1)\right) = \Delta(t+1) + \lambda a(t).
	\end{align}
	
	For a finite state space $S$, the growth condition in \cite[Eq.~11.21]{altman1999constrained} holds. Since the transition cost $C(s(t), a(t), s(t+1)) \geq 0$ is bounded below, the conditions of \cite[Corollary~12.2]{altman1999constrained} are satisfied, ensuring the optimal solutions of the dual and primal problems coincide. Thus, the optimal solution to the primal \textsc{cmdp} \eqref{eq_CMDP} is found by solving $\sup_{\lambda \geq 0} g(\lambda)$, where $\pi_{\lambda}^\ast$ comes from \eqref{eqn_DualMDPpolicy}. Specifically, the optimal policy is obtained by first solving the unconstrained \textsc{mdp} \eqref{eqn_DualMDPpolicy} for fixed $\lambda$ to get $\pi_{\lambda}^\ast$, and then optimizing $\lambda$ as in \eqref{eq_DualMDP}. We proceed to prove that $\pi_{\lambda}^\ast$ is a threshold policy.
	
	\begin{proposition}
		The optimal policy of the MDP problem \eqref{eqn_DualMDPpolicy} is a threshold policy.
	\end{proposition}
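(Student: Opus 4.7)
The plan is to apply the standard structural-analysis program for average-cost MDPs with a finite state space to the Lagrangian problem \eqref{eqn_DualMDPpolicy}. Since $|S| = \Delta_{\max}+1$ is finite and one-step costs are bounded, standard results (e.g., policy iteration or the vanishing-discount argument) guarantee the existence of a bounded relative value function $V$ and a constant $J$ satisfying the Bellman optimality equation
\begin{equation*}
V(\Delta) + J = \min_{a \in \{0,1\}} Q(\Delta,a), \qquad Q(\Delta,a) = c_\lambda(\Delta,a) + \sum_{\Delta'} \mathbb{P}(\Delta' \mid \Delta, a)\, V(\Delta'),
\end{equation*}
where $c_\lambda(\Delta,a)$ is the expected one-step Lagrangian cost derived from \eqref{eq_LagrangianCostFunction}. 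It then suffices to show that the advantage $\Phi(\Delta) := Q(\Delta,1) - Q(\Delta,0)$ is non-increasing in $\Delta$; the smallest $\Delta_T$ with $\Phi(\Delta_T) \leq 0$ is then the optimal threshold, and $a^\ast(\Delta) = \mathbf{1}\{\Delta \geq \Delta_T\}$ is optimal.

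First I would compute $\Phi$ in closed form using \eqref{eqn_CMDPTransProbs}. A short calculation gives $c_\lambda(\Delta,0) = \Delta + p_g$ and $c_\lambda(\Delta,1) = \lambda + p_g + (1-p_s)\Delta$, and summing the transition kernel against $V$ produces
\begin{equation*}
\Phi(\Delta) = \bigl(\lambda - p_s\Delta\bigr) \;-\; p_s\Bigl\{\,p_g\bigl[V(\Delta+1) - V(1)\bigr] + (1-p_g)\bigl[V(\Delta) - V(0)\bigr]\Bigr\}.
\end{equation*}
The first parenthesised term is strictly decreasing in $\Delta$, and the bracketed term is non-decreasing in $\Delta$ whenever $V$ is non-decreasing on $\{0,\ldots,\Delta_{\max}\}$. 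Hence, once monotonicity of $V$ is established, $\Phi$ is non-increasing and the threshold structure follows immediately, together with the observation that $\Phi(\Delta_{\max}) \leq 0$ for all sufficiently small $\lambda$, so that a finite threshold always exists.

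The principal obstacle, and the main technical step, is showing that $V$ is non-decreasing. I would obtain this via value iteration: define $V_0 \equiv 0$ and $V_{n+1}(\Delta) = \min_a\{c_\lambda(\Delta,a) + \sum_{\Delta'} \mathbb{P}(\Delta' \mid \Delta, a)\, V_n(\Delta')\}$, and prove inductively that each $V_n$ is non-decreasing in $\Delta$. Two monotonicity facts power the induction: (i) for each fixed $a$, $c_\lambda(\Delta,a)$ is non-decreasing in $\Delta$; and (ii) for each fixed $a$, the next-state distribution $\mathbb{P}(\cdot\mid\Delta,a)$ is stochastically non-decreasing in $\Delta$, which one reads directly from \eqref{eqn_CMDPTransProbs} by noting that only the ``non-reset'' mass, supported on $\{\Delta,\Delta+1\}$, shifts upward with $\Delta$, while the reset mass at $\{0,1\}$ (under $a=1$) is independent of $\Delta$. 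Combining (i)–(ii) with the inductive hypothesis preserves monotonicity under the Bellman operator, and the boundary at $\Delta_{\max}$ is handled by the same dominance argument. Passing to the limit of the normalised iterates $V_n(\cdot) - V_n(0)$ yields a non-decreasing relative value function $V$, and together with the computation of $\Phi$ above this establishes the threshold structure claimed.
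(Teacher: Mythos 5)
Your proposal is correct and follows essentially the same route as the paper's proof: write the Bellman optimality equation, show the advantage $Q(\Delta,1)-Q(\Delta,0)$ is non-increasing in $\Delta$ by reducing it to monotonicity of the relative value function, and establish that monotonicity by induction on value iteration (your phrasing via stochastic monotonicity of the kernel is just a compact restatement of the term-by-term comparison the paper writes out). Your explicit closed form for $\Phi(\Delta)$ is consistent with the paper's computation of $\mathrm{d}\mathbb{V}(\Delta^+)-\mathrm{d}\mathbb{V}(\Delta^-)$, and your extra attention to the boundary state $\Delta_{\max}$ and to existence of a finite threshold are harmless additions.
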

	
	\begin{proof}
		We begin by establishing that the \textsc{mdp} is weakly accessible, thereby ensuring the existence of an optimal policy. An \textsc{mdp} is weakly accessible if its state space can be partitioned into a transient set $S_t$ and a communicating set $S_c$, where all states in $S_c$ are mutually reachable under some stationary policy. For any stationary stochastic policy $\pi$ assigning positive probability to each action $a \in \{0,1\}$, any state $\Delta'$ is reachable from $\Delta$. Specifically, if $\Delta' < \Delta$, take $a=1$ once, then $a=0$ for $\Delta'$ steps; if $\Delta' \geq \Delta$, take $a=0$ for $\Delta' - \Delta$ steps. Hence, the \textsc{mdp} is weakly accessible; thereby by Proposition 4.2.3 in \cite{bertsekas2011dynamic}, the optimal average cost $J_\lambda^*$ is independent of the initial state $s(0)$. Proposition 4.2.6 guarantees the existence of an optimal policy $\pi_\lambda^*$, and Proposition 4.2.1 ensures $J_\lambda^*$, the value function $\mathbb{V}(s)$, and $\pi_\lambda^*$ satisfy the Bellman equations:
		\begin{align}
			\label{eqn_Bellman}
			J_\lambda^*\!\!+\!\mathbb{V}(s)\!=\!\!\!\min_{a\in\left\{0,1\right\}}\!{\!Q_\lambda(s,a)}, \ \  
			\pi^\ast(s) \!\in\!  \argmin_{a\in\left\{0,1\right\}}{Q_\lambda(s,a)},
		\end{align}
		where $Q_\lambda(s,a) = C_\lambda(s,a)+\!\sum_{s^\prime\in S}{\mathbb{P}\!\left(s^\prime \big | s,a\right)\!\mathbb{V}(s^\prime)}.$
		Here, $C_\lambda(s,a)$ represents the average cost per slot, defined by the transition costs as: $C_\lambda(s,a) = \sum_{s^\prime\in S} {\mathbb{P}\!\left(s^\prime \big | s,a\right) C_\lambda\left(s,a,s^\prime\right)},$ with $C_\lambda(s,a,s^\prime) = \Delta^\prime + \lambda a$.
		The Bellman equation for state $s = \Delta$ can be written as $a^\ast(\Delta) = 1$ if $Q_\lambda(\Delta, 1) < Q_\lambda(\Delta, 0)$, and $a^*(\Delta) = 0$ otherwise. Thus, the optimal action $a^*(\Delta)$ depends on the sign of the difference $\text{d}\mathbb{V}(\Delta) = Q_\lambda(\Delta, 1) - Q_\lambda(\Delta, 0)$. 
		We next show that $\text{d} \mathbb{V}(\Delta)$ is a decreasing function of $\Delta$. For $\Delta\!^- \leq \Delta\!^+$, we prove that $\text{d} \mathbb{V}(\Delta\!^+) \leq \text{d} \mathbb{V}(\Delta\!^-)$, i.e., 
		\begin{align}
			\label{eq_DVineq}
			\text{d} \mathbb{V}(\Delta\!^+) - \text{d} \mathbb{V}(\Delta\!^-) \leq 0.
		\end{align}
		
		This implies a threshold policy: if $\text{d} \mathbb{V}(\Delta_T) < 0$ for some $\Delta_T$, then $\text{d} \mathbb{V}(\Delta) < 0$ for all $\Delta \geq \Delta_T$, so the optimal action remains $1$ for all such states.		
		Using \eqref{eqn_CMDPTransProbs}, we obtain:
		\begin{align*}
			\text{d} \mathbb{V}(\Delta\!^+) &\!-\! \text{d} \mathbb{V}(\Delta\!^-) 
			\!=\! {-p_s \big[ \Delta\!^+ \!-\! \Delta\!^- \big]} \!-\! \bar{p}_g p_s \big[ \mathbb{V}(\Delta\!^+) \!-\! \mathbb{V}(\Delta\!^-)\big] \\
			& \!-\!\bar{p}_g p_s \big[ \mathbb{V}(\Delta\!^+ \!+\! 1) \!-\! \mathbb{V}(\Delta\!^- \!+\! 1)\big].
		\end{align*}
		
		The first term is non-positive. Thus, to prove inequality \eqref{eq_DVineq}, it suffices to prove that $\mathbb{V}(\Delta)$ is increasing in $\Delta$, i.e., for $\Delta\!^- \leq \Delta\!^+$, 
		$\mathbb{V}(\Delta\!^-) \leq \mathbb{V}(\Delta\!^+).$
		We use the Value Iteration Algorithm (\textsc{via}) and induction. \textsc{via} converges to $V(\Delta)$ regardless of the initial $\mathbb{V}_0(\Delta)$, i.e., $\lim_{k\to\infty} \mathbb{V}_k(\Delta) = \mathbb{V}(\Delta)$ for all $\Delta \in S$.
		The \textsc{via} iteration is:
		\begin{align*}
			\mathbb{V}_{k+1}(\Delta)\!=\!\!\min_{a\in\left\{0,1\right\}} \bigg\{\!\underbrace{\sum_{\Delta^\prime\in S} \mathbb{P}\!\left(\Delta^\prime \big|\Delta,a\right) \Big( \Delta^\prime \!+\! \lambda a \!+\! \mathbb{V}_k(\Delta^\prime) \Big)}_{Q_{\lambda,k}(\Delta,a)}\!\bigg\}.
		\end{align*}
		
		We prove by induction that $\mathbb{V}_k(\Delta\!^-) \!\leq\! \mathbb{V}_k(\Delta\!^+)$ for all $k \!\geq\! 0$. For $k\!=\!0$, $\mathbb{V}_0(\Delta)\!=\!0$, so the claim holds. Assume $\mathbb{V}_k(\Delta\!^-) \!\leq\! \mathbb{V}_k(\Delta\!^+)$ and show $\mathbb{V}_{k+1}(\Delta\!^-) \!\leq\! \mathbb{V}_{k+1}(\Delta\!^+)$. Since $\mathbb{V}_{k+1}(\Delta) = \min \{\mathbb{V}_{k+1}^0(\Delta), \mathbb{V}_{k+1}^1(\Delta)\}$ with $\mathbb{V}_{k+1}^0 \!=\! Q_{\lambda,k}(\Delta,0)$ and $\mathbb{V}_{k+1}^1 \!=\! Q_{\lambda,k}(\Delta,1)$, it suffices to prove $\mathbb{V}_{k+1}^0(\Delta\!^-) \!\leq\! \mathbb{V}_{k+1}^0(\Delta\!^+)$ and $\mathbb{V}_{k+1}^1(\Delta\!^-) \!\leq\! \mathbb{V}_{k+1}^1(\Delta\!^+)$, since then $\min\{\mathbb{V}_{k+1}^0(\Delta\!^-),\mathbb{V}_{k+1}^1(\Delta\!^-)\} \!\leq\! \min\{\mathbb{V}_{k+1}^0(\Delta\!^+),\mathbb{V}_{k+1}^1(\Delta\!^+)\}$. By the induction hypothesis, all bracketed terms below are non-positive:
		\begin{align*}
			\mathbb{V}&^0_{\!k+1}(\Delta\!^-) \!-\! \mathbb{V}^0_{\!k+1}(\Delta\!^+) \!=\! \big[\Delta\!^- \!-\! \Delta\!^+ \big] \\ 
			&\!+\!  \bar{p}_g \big[ \mathbb{V}_{\!k}(\Delta\!^-) \!-\! \mathbb{V}_{\!k}(\Delta\!^+) \big] \!+\! p_g \big[\mathbb{V}_{\!k}(\Delta\!^- \!\!+\! 1) \!-\! \mathbb{V}_{\!k}(\Delta\!^+ \!\!+\! 1) \big].  \\
			\mathbb{V}&^1_{\!k+1}(\Delta\!^-) \!-\! \mathbb{V}^1_{\!k+1}(\Delta\!^+) \!=\! \bar{p}_s\big[\Delta\!^- \!\!-\! \Delta\!^+ \big] \\ 
			&\!+\!  \bar{p}_g \bar{p}_s \big[ \mathbb{V}_{\!k}(\Delta\!^-) \!-\! \mathbb{V}_{\!k}(\Delta\!^+) \big] \!+\! p_g \bar{p}_s \big[\mathbb{V}_{\!k}(\Delta\!^- \!\!+\! 1) \!-\! \mathbb{V}_{\!k}(\Delta\!^+ \!\!+\! 1) \big]. 
		\end{align*}
        Thus $\mathbb{V}_{k+1}(\Delta\!^-) \!\leq\! \mathbb{V}_{k+1}(\Delta\!^+)$, completing the proof.
	\end{proof}

	\section{}
	\label{Appen_Proof_RSpolicy_SSProbs}
	\begin{proof}[Proof of Proposition \ref{Prop_StateProbRandom}]
		At each time slot, a transmission attempt occurs w.p. $\alpha$ and succeeds w.p. $p_s$, giving a success probability of $\alpha p_s$. Upon success, the VAoI resets to $0$ or $1$ depending on whether a new version is generated in that slot. Otherwise, if no transmission occurs or it fails (w.p. $1 - \alpha p_s$), the VAoI increases by $0$ or $1$, again depending on version generation. The Markov chain with transition probabilities is shown in Fig.~\ref{fig_MC_RSpolicy}, and solving its balance equations as described below yields Proposition \ref{Prop_StateProbRandom}.
		\[
			\begin{array}{ll}
				\mu_0 \!=\! (1\!-\!p_g)\mu_0 \!+\! \alpha p_s (1\!-\!p_g) {\sum_{i=1}^{\infty} \! \mu_i}, \\
				\mu_1 \!=\! p_g \mu_0 \!+\! \big[ \alpha p_s p_g \!+\! (1\!-\!\alpha p_s)(1\!-\!p_g) \big] \mu_1 \!+\! \alpha p_s p_g {{\sum_{i=2}^{\infty} \! \mu_i}}, \\
				\mu_n \!=\! (1\!-\!\alpha p_s)p_g \mu_{n-1} \!+\! (1\!-\!\alpha p_s)(1\!-\!p_g)\mu_n,\quad \hfill n \geq 2. \qedhere 
			\end{array}
		\]
	\end{proof}
	
	\begin{figure}[tb]
		\centering
		\includegraphics[scale=0.071]{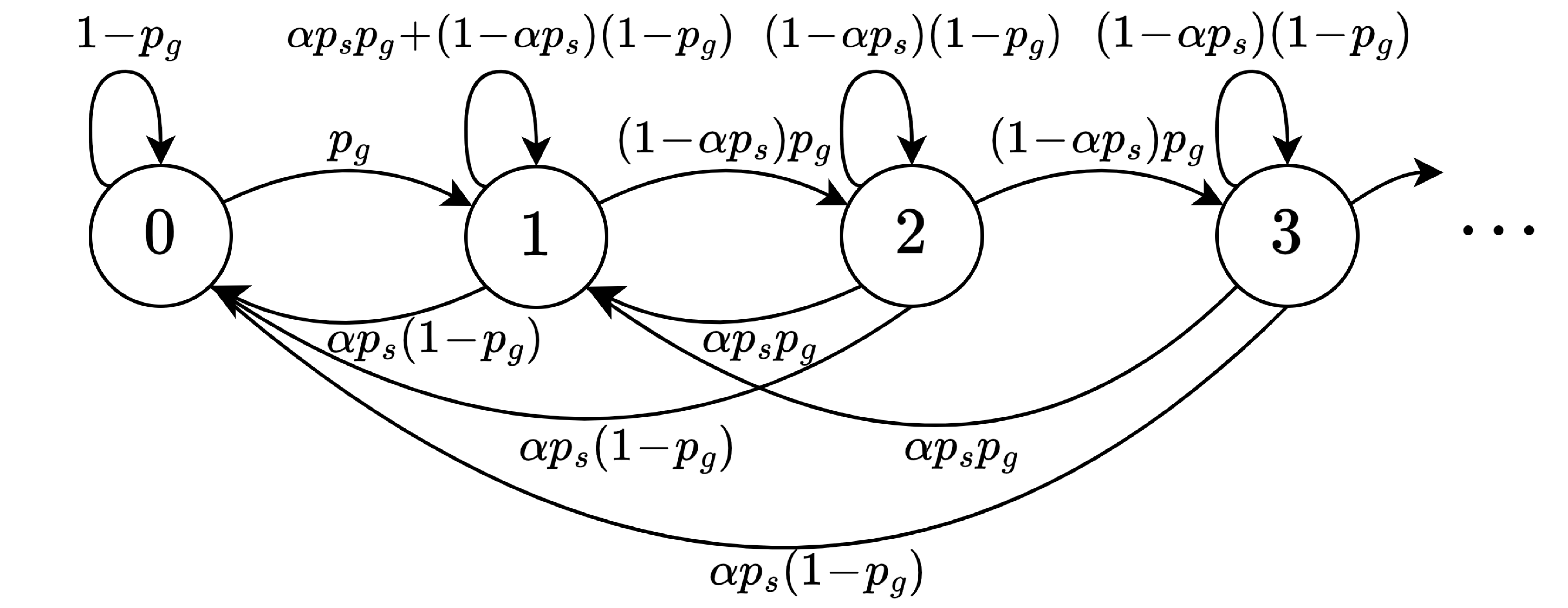} 
        \vspace{2pt}
		\caption{\textsc{dtmc} model of the VAoI under a randomized stationary policy.}
        \vspace{2pt}
		\label{fig_MC_RSpolicy}
	\end{figure}

    \begin{figure*}[!b]
		\centering
        \vspace{-0.8em}
		\includegraphics[scale=0.070, trim=140 15 140 15,clip]{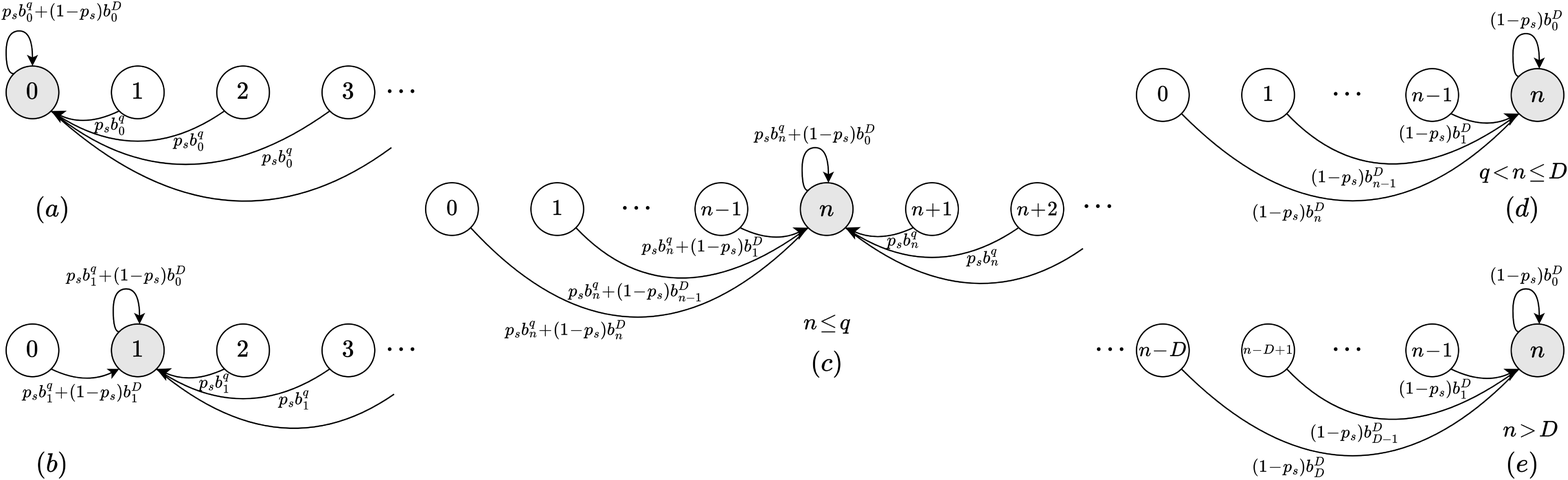} 
		\vspace{4pt}
		\caption{State transitions to target states $(a)\ 0$, $(b)\ 1$, $(c)\ 1 \!<\! n \!\leq\! q$, $(d)\ q \!<\! n \!\leq\! D$, and $(e)\ n \!>\! D$ in the Markov chain under a uniform policy.}
		\label{fig_UniformPolicy_all}
	\end{figure*}

	\section{}
	\label{Appen_Proof_UniformPolicy_SSProbs}
	
	\begin{proof}[Proof of Proposition \ref{Prop_StateProbUniform}]
		For each sampled Markov chain $Y^{(q)}(k) = \Delta(kD \!+\! q)$, the VAoI transitions from $Y^{(q)}(k)$ to $Y^{(q)}(k\!+\!1)$ depending on whether the transmission at $t \!=\! (k\!+\!1)D$ succeeds (or not) and on the number of source versions generated in the $q$ (or $D$) slots.  
		\begin{itemize}[leftmargin=*]
			\item \textit{Successful transmission} (w.p. $p_s$): the VAoI resets to zero and subsequently increases with the versions generated within $q$ time slots: it equals $0$ if none are generated (w.p. $p_s b^{q}_0$); $1$ if one is generated (w.p. $p_s b^{q}_1$); and so on, up to $n \leq q$, where $n$ versions are generated (w.p. $p_s b^{q}_n$).
			Version generation per slot follows i.i.d. Bernoulli with parameter $p_g$; thus, generating $z$ versions in $q$ slots follows $Bin(q, p_g)$ w.p. $b^{q}_z \!=\! \binom{q}{z} p_g^z (1 - p_g)^{q-z}$.
			
			\item \textit{Unsuccessful transmission} (w.p. $1 - p_s$): the VAoI increases by the number of versions generated during $D$ slots: $0$ w.p. $(1 - p_s) b^{D}_0$, $1$ w.p. $(1 - p_s) b^{D}_1$, and so on, up to $D$ w.p. $(1 - p_s) b^{D}_D$.
		\end{itemize}
		
		Fig. \ref{fig_UniformPolicy_all} illustrates the Markov chain states and their transitions to target states—$0$, $1$, $1 \!<\! n \!\leq\! q$, $q \!<\! n \!\leq\! D$, and $n \!>\! D$—together with the associated transition probabilities. By formulating the balance equations for each state, we derive:
		\[
			\begin{array}{ll}
				\mu^{(q)}_0 \!=\! (1\!-\!p_s) b^{D}_0\mu^{(q)}_0 \!+\! p_sb^{q}_0 {\sum_{i=0}^{\infty} \! \mu^{(q)}_i}, \\
				\mu^{(q)}_1 \!=\! (1\!-\!p_s) \left( b^{D}_1\mu^{(q)}_0 \!+\! b^{D}_0\mu^{(q)}_1 \right) \!+\! p_sb^{q}_1 {\sum_{i=0}^{\infty} \! \mu^{(q)}_i}, \\
				\mu^{(q)}_n \!=\! (1\!-\!p_s)\sum_{i=0}^{n} b^{D}_{i}\mu^{(q)}_{n-i} \!+\! p_sb^{q}_n {\sum_{i=0}^{\infty} \! \mu^{(q)}_i},\ \ \hfill 1 < n \leq q,\\
				\mu^{(q)}_n \!=\! (1\!-\!p_s)\sum_{i=0}^{n} b^{D}_{i}\mu^{(q)}_{n-i},\ \ \hfill q < n \leq D,\\
				\mu^{(q)}_n \!=\! (1\!-\!p_s)\sum_{i=0}^{D} b^{D}_{i}\mu^{(q)}_{n-i},\ \ \hfill n \geq D+1,
			\end{array}
		\]
		where, given that $\sum_{i=0}^{\infty} \mu^{(q)}_i = 1$, the first equation immediately gives $\mu^{(q)}_0 = \frac{p_sb^{q}_0}{\beta}$, while the other equations follow from straightforward algebraic manipulation.
	\end{proof}

	\section{}
	\label{Appen_Proof_ThrPolicy_SSProbs}
	
	\begin{proof}[Proof of Proposition \ref{Prop_StateProbThr}]
		For $\Delta_T = 0$, when the VAoI is zero, transmission decisions do not affect state transitions, since the VAoI depends solely on whether a new version is generated. Hence, the system dynamics and steady-state distribution are identical for $\Delta_T = 0$ and $\Delta_T = 1$. The case $\Delta_T = 0$ represents an \emph{always-update} policy, equivalent to a randomized stationary policy with $\alpha = 1$.
		For $\Delta_T \geq 2$, if $\Delta < \Delta_T$, no transmission occurs; the VAoI increments by $1$ if a new version is generated, otherwise it remains the same. If $\Delta \geq \Delta_T$, transmission occurs: on success, VAoI resets to $1$ or $0$ depending on whether a new version is generated; on failure, it increases by $1$ if a new version is generated, or remains unchanged otherwise. The resulting Markov chain is depicted in Fig.~\ref{fig_MC_ThrPolicyDTo2}, with balance equations provided for $\Delta_T \geq 2$:
		\[
			\begin{array}{ll}
				\mu_0 \!=\! (1\!-\!p_g)\mu_0 \!+\! p_s (1\!-\!p_g){\sum_{i=\Delta_T}^{\infty} \! \mu_i}, \\
				\mu_1 \!=\! p_g \mu_0 \!+\! (1\!-\!p_s) \mu_1 \!+\! p_s p_g {\sum_{i=\Delta_T}^{\infty} \! \mu_i} \!=\! \frac{\mu_0}{1\!-\!p_g}, \\
				\mu_n \!=\! p_g \mu_{n-1} \!+\! (1\!-\!p_g) \mu_{n} \!=\! \mu_{n-1}, \hfill 2 \leq n \leq \Delta_T\!-\!1, \\
				\mu_{\Delta_T} \!=\! p_g \mu_{\Delta_T-1} \!+\! (1\!-\!p_s)(1\!-\!p_g) \mu_{\Delta_T}, \\
				\mu_n \!=\! (1\!-\!p_s)p_g \mu_{n-1} \!+\! (1\!-\!p_s)(1\!-\!p_g) \mu_{n}, \quad \hfill n \geq \Delta_T\!+\!1,
			\end{array}
		\]
        
		The third line is omitted when $\Delta_T \!=\! 2$. By applying $\sum_{i=0}^{\infty} \mu_i \!=\! 1$  and simplifying, Proposition~\ref{Prop_StateProbThr} follows.
	\end{proof}
    
        \begin{figure}[tb]
			\centering
			\includegraphics[scale=0.07]{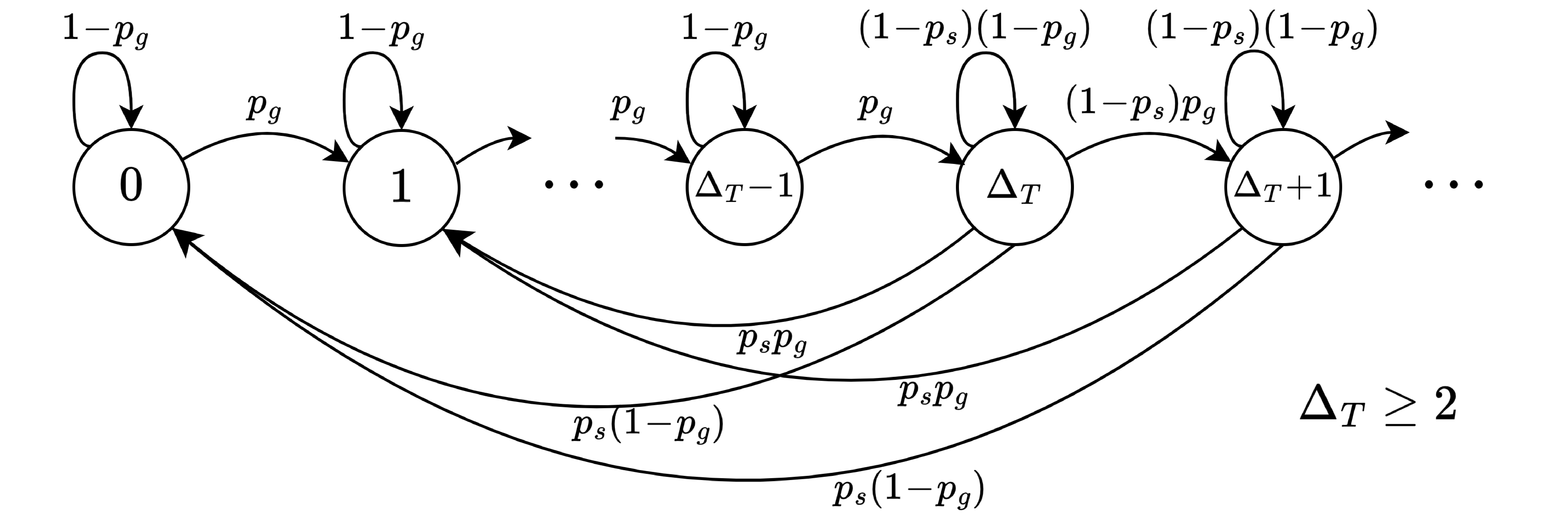} 
            \vspace{2pt}
			\caption{\textsc{dtmc} model of the VAoI under the threshold policy with $\Delta_T \!\geq\! 2$.}
            \vspace{2pt}
			\label{fig_MC_ThrPolicyDTo2}
		\end{figure}

	\section{}
	\label{Appen_Proof_OptimalThr}
	
	\begin{proof}[Proof of Theorem \ref{Theorem_OptimalThreshold}]
		Threshold policies that satisfy the average update rate constraint \eqref{eq_constraint} are considered feasible. For a threshold policy with parameter $\Delta_T$, the left-hand side of inequality \eqref{eq_constraint} represents the probability of the states that trigger transmission, i.e., $R(\Delta_T) \!=\! \mathbb{P}(\Delta \!\geq\! \Delta_T)$, given by:
		\begin{align}
			\label{eqn_FeasibleThresholds}
			R(\Delta_T) \!=\!\!\sum_{n=\Delta_T}^{\infty} \!\! \mu_n 
			\!\overset{(a)}{=}\!
			\begin{cases}
				1, & \Delta_T \!=\! 0, \\
				\frac{p_g}{(\Delta_T\!-\!1)p_s+\beta}, & \Delta_T \!\geq\! 1,
			\end{cases}
		\end{align}
		Step $(a)$ directly follows from the balance equation and the steady-state probabilities presented in Proposition \ref{Prop_StateProbThr}, utilizing a geometric series analysis similar to that in the proof of Lemma \ref{Lemma_AvgVAoIThreshold}. Therefore, a threshold policy is feasible if it satisfies $R(\Delta_T) \leq \alpha$, which can be simplified using \eqref{eqn_FeasibleThresholds}:
		\begin{align}
			\label{eqn_ThresholdLowerBound}
			\Delta_T \geq \frac{p_g}{p_s}\left( \frac{1}{\alpha} - 1 +p_s\right)\!.
		\end{align}
		
		This implies that $\Delta_T$ must exceed a certain lower bound. Meanwhile, Lemma~\ref{Lemma_AvgVAoIThreshold} shows that the average VAoI under the threshold policy \eqref{eqn_AvgVAoIThreshold} is an increasing function of $\Delta_T$, since it can be rewritten as:
		$\bar{\Delta} \!=\! \frac{\Delta_T}{2} \left( 1\!-\!\frac{\beta}{(\Delta_T\!-\!1)p_s \!+\! \beta} \right) \!+\! \frac{p_g}{p_s}.$
		The constant term $\frac{p_g}{p_s}$ remains fixed; the first term increases with $\Delta_T$ due to the linear growth of $\frac{\Delta_T}{2}$ and the rising value of $\left( 1 \!-\! \frac{\beta}{(\Delta_T - 1)p_s + \beta} \right)$, since $\frac{\beta}{(\Delta_T - 1)p_s + \beta}$ decreases as $\Delta_T$ grows. Therefore, minimizing the average VAoI under the update rate constraint entails selecting the smallest integer $\Delta_T$ that satisfies \eqref{eqn_ThresholdLowerBound}. However, exact equality may not always be attainable, as $\Delta_T$ must be integer-valued.
		To achieve $R(\Delta_T) \!=\! \alpha$, a randomized mixture policy can be employed~\cite{beutler1985optimal}\cite[Sec.~6.3]{altman1999constrained}, combining two thresholds $\Delta_T^\ast$ and $\Delta_T^\ast \!-\! 1$, where $R(\Delta_T^\ast) \!\leq\! \alpha$ and $R(\Delta_T^\ast \!-\! 1) \!>\! \alpha$, resulting in \eqref{eqn_OptimalThreshold}. The threshold $\Delta_T^\ast$ is applied w.p. $\gamma$ and $\Delta_T^\ast \!-\! 1$ with $1 \!-\! \gamma$, where $\gamma R(\Delta_T^\ast) + (1-\gamma) R(\Delta_T^\ast-1) = \alpha$, leading to \eqref{Optimal_gamma}. This \emph{mixed threshold policy} constitutes the optimal solution to the \textsc{cmdp} under the average rate constraint \eqref{eq_constraint} \cite{beutler1985optimal}.
	\end{proof}

	\section{}
	\label{Appen_Proof_VAoInodei}
	
	\begin{proof}[Proof of Proposition \ref{Prop_VAoInodei}]
		Node $i$ transmits updates to node $i+1$ in every time slot. Upon successful reception, node $i+1$ retains only the most recent version, discarding all earlier ones. Thus, the VAoI at node $i+1$ depends on the most recent successful transmissions from node $i$. If the latest transmission at time $t$ succeeds (w.p. $\rho_i$), node $i+1$ obtains the version held by node $i$ in the previous slot, i.e., $V_{i+1}(t) = V_i(t-1)$. The VAoI $\Delta_{i+1}(t) = V_S(t) - V_{i+1}(t)$ is then:
		\begin{align}
			\label{eqn_ProofVAoI_ip1_first}
			\Delta_{i+1}(t) = \underbrace{V_S(t) - V_S(t\!-\!1)}_{\eta_{1}} + \underbrace{V_S(t\!-\!1) - V_{i}(t-1)}_{\Delta_{i}(t-1)},
		\end{align}
		where $\eta_1 = V_S(t) - V_S(t-1)$ represents the number of new versions generated by the source in the most recent slot (either $0$ or $1$).
		If the latest transmission fails (w.p. $1 - \rho_i$), but the previous one at $t-1$ succeeds (w.p. $\rho_i$), then $V_{i+1}(t) = V_i(t-2)$, and the VAoI becomes:
		\begin{align}
			\label{eqn_ProofVAoI_ip1_second}
			\Delta_{i+1}(t) 
			&= \underbrace{V_S(t) - V_S(t\!-\!2)}_{\eta_{2}} + \underbrace{V_S(t\!-\!2) - V_{i}(t\!-\!2)}_{\Delta_{i}(t-2)}, 
		\end{align}
		where $\eta_2$ denotes the number of versions generated over the past two slots. Since the source generates a new version in each slot according to a Bernoulli process with parameter $p_g$, $\eta_k \sim \text{Bin}(k, p_g)$ over $k$ slots. Generally, the VAoI at node $i+1$ at time $t$ equals the VAoI at node $i$ from $m_i$ slots earlier plus the number of generated versions in those $m_i$ slots, where $m_i$ follows a Geometric distribution representing the number of transmissions required for successful delivery over link $i$.
	\end{proof}

	\section{}
	\label{Appen_Proof_VAoIDestNode}
    
	\begin{proof}[Proof of Lemma \ref{Lemma_VAoIDestNode}]
		According to Proposition \ref{Prop_VAoInodei}, the VAoI at all nodes can be expressed recursively as follows:
		\begin{align*}
        \scalebox{0.8}{$
			\begin{cases}
                \!\Delta_2(t)\!=\!\Delta_1(t\!-\!m_1)\!+\!\eta_{m_{1}}, \\
                \vdots \\
				\!\Delta_N(t)\!=\!\Delta_{N\!-\!1}(t\!-\!m_{N\!-\!1})\!+\!\eta_{m_{N\!-\!1}}, \\
				\!\Delta_{N\!+\!1}(t)\!=\!\Delta_{N}(t\!-\!m_{N})\!+\!\eta_{m_{N}},   
			\end{cases} 
            \hspace{-10pt} \! \Rightarrow \! \Delta_{N\!+\!1}(t)\!=\!\Delta_{1} (t\!-\!\!\underbrace{\sum_{i=1}^{N} m_{i}}_{=\tau_{N}})\!+\!\!\underbrace{\sum_{i=1}^{N} \eta_{m_i}}_{=\beta_{N}}\!.
            $}
		\end{align*}
		
		The expected value of $\tau_N$ and $\beta_N$ is derived:
		\begin{align*}
			\mathbb{E} \left[\tau_N \right] &\!=\!\! \sum_{i=1}^{N} \mathbb{E} \left[ m_{i} \right] \!=\!\! \sum_{i=1}^{N} \frac{1}{\rho_i}, \\
			\mathbb{E} \left[\beta_N \right] &\!=\! \!\sum_{i=1}^{N} \mathbb{E}  \left[ \eta_{m_i} \right] \!\overset{(a)}{=}\!  \sum_{i=1}^{N} \mathbb{E}_{m_i} \big[ \mathbb{E} \left[ \eta_{m_i} | m_i \right] \big] \!=\! p_g \! \sum_{i=1}^{N} \mathbb{E} \left[  m_i \right],
		\end{align*}
		where $(a)$ follows from the tower rule: $\mathbb{E} \left[ X \right] \!=\! \mathbb{E}_{Y} \!\big[ \mathbb{E}_{X|Y} \left[ X|Y \right] \big]$. Note that $\eta_{m_i} \!\mid\! m_i$ follows a Binomial distribution, as given in \eqref{eq_BinomialPMF}, with mean $m_i p_g$.
	\end{proof}

	\section{}
	\label{Appen_Proof_AvgVAoIlastNode}
	
	\begin{proof}[Proof of Theorem \ref{Lemma_AvgVAoIlastNode}]
		According to Lemma \ref{Lemma_VAoIDestNode},
		\begin{align}
			\bar{\Delta}_{N+1}(t) & \!=\! \mathbb{E} \left[ \Delta_{N+1}(t) \right] \!=\! \mathbb{E} \left[ \Delta_{1} (t\!-\!\tau_{N})\!+\!\beta_{N} \right]\\
			&\!\overset{(a)}{=}\! {\sum_{\tau = N}^{\infty} \!\mathbb{P}(\tau_N\!=\!\tau) \mathbb{E} \left[ \Delta_{1}(t\!-\!\tau) \right]} \!+\! p_g \!\sum_{i=1}^{N} \frac{1}{\rho_i},  \notag
		\end{align}
        
		\noindent where $(a)$ follows from the tower rule: 
		$\mathbb{E} \!\left[ \Delta_{1} \!(t\!-\!\tau_{N})\right] \!=\! \mathbb{E}_{\tau_N} \!\big[ \mathbb{E} \left[ \Delta_{1}\!(t\!-\!\tau_N) | \tau_N \right] \big]
			\!\!=\!\!\!\sum_{\tau = N}^{\infty} \!\mathbb{P}(\tau_N\!\!=\!\!\tau) \mathbb{E} \!\left[ \Delta_{1}\!(t\!-\!\tau) \right]\!.$
		The steady-state value $\bar{\Delta}_{N\!+\!1} \!=\! \lim_{t \rightarrow \infty} \bar{\Delta}_{N+1}(t)$ is then:
        
		\begin{align*}
			\bar{\Delta}_{N\!+\!1} \!\!\overset{(b)}{=}\!\! \lim_{t \rightarrow \infty} \!\!\mathbb{E} \!\left[ \Delta_{1}\!(t) \right] \!\!\sum_{\tau = N}^{\infty} \!\!\mathbb{P}(\tau_N\!=\!\tau)  \!+\! p_g \!\!\sum_{i=1}^{N} \!\!\frac{1}{\rho_i}
			 \!=\! \bar{\Delta}_{1} \!\!+\! p_g \!\!\sum_{i=1}^{N} \!\!\frac{1}{\rho_i}\!.
		\end{align*}
		
		Equality $(b)$ follows directly from the relation $\lim_{t \rightarrow \infty} \mathbb{E}[\Delta_1(t-\tau)] = \lim_{t \rightarrow \infty} \mathbb{E}[\Delta_1(t)]$, which holds for an ergodic and integrable \textsc{dtmc} $\Delta_1(t)$, where $\mathbb{E}\big[|\Delta_1(t)|\big] < \infty$~\cite[Sec. 1.10]{norris1998markov}. This condition is satisfied by the \textsc{dtmc}s presented in Section \ref{Sec_OneHop} under all three policies, for which the steady-state distribution is stationary and therefore time-invariant.
	\end{proof}
	
	\balance
	\bibliographystyle{IEEEtran}
	\bibliography{Refs}
	
\end{document}